\newfont{\mycrnotice}{ptmr8t at 7pt}
\newfont{\myconfname}{ptmri8t at 7pt}
\begin{document}

\permission{Permission to make digital or hard copies of all or part of this
work for personal or classroom use is granted without fee provided that copies
are not made or distributed for profit or commercial advantage and that copies
bear this notice and the full citation on the first page.
Copyrights for components of this work owned by others than the author(s) must
be honored.
Abstracting with credit is permitted.
To copy otherwise, or republish, to post on servers or to redistribute to lists,
requires prior specific permission and/or a fee. Request permissions from
Permissions@acm.org.}
\conferenceinfo{SIGSOFT/FSE'14,}{November 16 - 22 2014, Hong Kong, China\\
{\mycrnotice{Copyright is held by the owner/author(s). Publication rights
licensed to ACM.}}}
\copyrightetc{ACM \the\acmcopyr}
\crdata{978-1-4503-3056-5/14/11\ ...\$15.00.\\
http://dx.doi.org/10.1145/2635868.2635912}

\title{Feedback Generation for Performance Problems in Introductory Programming Assignments
\thanks{The second and third author were supported by the Vienna Science and Technology Fund (WWTF) grant ICT12-059.}
}

\numberofauthors{3}
\author{
\alignauthor
Sumit Gulwani\\
\affaddr{Microsoft Research, USA}\\
\email{sumitg@microsoft.com}
\alignauthor
Ivan Radi\v{c}ek\\
\affaddr{TU Wien, Austria}\\
\email{radicek@forsyte.at}
\alignauthor
Florian Zuleger\\
\affaddr{TU Wien, Austria}\\
\email{zuleger@forsyte.at}
}

\maketitle

\def\istechreport{}
\newcommand\ignore[1]{}
\newcommand\delete[1]{}
\newcommand\modify[1]{{\color{blue}#1}}

\newtheorem{definition}{Definition}
\newtheorem{theorem}{Theorem}
\newtheorem{lemma}{Lemma}

\newcommand\equality{\ensuremath{E}}
\newcommand\equalities{\ensuremath{\mathcal{E}}}
\newcommand\traces{\ensuremath{\Gamma}}
\newcommand\criterion{\ensuremath{c}}
\newcommand\compare{\ensuremath{\delta}}
\newcommand\Full{\ensuremath{\mathit{full}}}
\newcommand\Partial{\ensuremath{\mathit{partial}}}
\newcommand\Id{\ensuremath{\mathit{Id}}}
\newcommand\prog{P}
\newcommand\progb{Q}
\newcommand\lang{L}
\newcommand\progvar{\mathit{Var}}
\newcommand\valdom{\mathit{Val}}
\newcommand\binvars{B}
\newcommand\var{v}
\newcommand\loc{\ell}
\newcommand\Loc{L}
\newcommand\val{\mathit{val}}
\newcommand\varmap{\pi}
\newcommand\progloc{\mathit{Loc}}
\newcommand\progtrace{\Gamma}
\newcommand\subseq{\sqsubseteq}
\newcommand\subseqq{\sqsubseteq}
\newcommand\vart[1]{\mathit{#1}}
\newcommand\dontcare{\mathbb{?}}
\newcommand\libset{L}
\newcommand\varset{V}
\newcommand\datadom{D}

\newcommand\default{\mathit{def}}
\newcommand\langl{$\mathcal{L}$\xspace}
\newcommand\dexpr{d}
\newcommand\nexpr{n}
\newcommand\bexpr{a}
\newcommand\expr{e}
\newcommand\varexpr{v}
\newcommand\stmt{s}
\newcommand\Stm{Stm}
\newcommand\BNFas{\;::=\;}
\newcommand\BNFalt{\;|\;}
\newcommand\opbin{op_{bin}}
\newcommand\opun{op_{un}}
\newcommand\keyw[1]{\texttt{#1}}
\newcommand\assign{:=}
\newcommand\compose{;}
\newcommand\observes{\keyw{observe}\xspace}
\newcommand\observesFun{\keyw{observeFun}\xspace}
\newcommand\nondet{\mathit{nd}}
\newcommand\eqfnc{E}
\newcommand\defeqfnc{E_=}
\newcommand\obsarg{o}
\newcommand\evaluate{\sigma}
\newcommand\state{\sigma}
\newcommand\trace{\gamma}
\newcommand\semmapsto{\mapsto}
\newcommand\semfun[2]{#1\!\left \llbracket #2 \right \rrbracket\!}
\newcommand\stmlab{l}
\newcommand\delim{;}
\newcommand\emptyseq{\epsilon}
\newcommand\covers{\keyw{cover}\xspace}
\newcommand\proglib{F}
\newcommand\libs{\keyw{f}\xspace}

\newcommand\invals{I}
\newcommand\inval{i}
\newcommand\algmain{Matches}
\newcommand\algtraceeq{Embed}
\newcommand\domset{D}
\newcommand\potential{G}

\newcommand\undef{\bot}
\newcommand\csharp{\textsc{C\#}\xspace}
\newcommand\clang{\textsc{C}\xspace}
\newcommand\pex{\textsc{Pex4Fun}\xspace}
\newcommand\tsc[1]{\textsc{#1}}
\renewcommand\paragraph[1]{{\bf #1}.}
\newcommand\incode[1]{\texttt{#1}}
\newcommand\ex[1]{{\bf\textsc #1}}

\newcommand\figlabel[1]{\label{fig:#1}}
\newcommand\figref[1]{Fig.~\ref{fig:#1}}
\newcommand\seclabel[1]{\label{sec:#1}}
\newcommand\secref[1]{\S\ref{sec:#1}}
\newcommand\tablabel[1]{\label{tab:#1}}
\newcommand\tabref[1]{Table~\ref{tab:#1}}
\newcommand\deflabel[1]{\label{def:#1}}
\newcommand\defref[1]{Definition~\ref{def:#1}}
\newcommand\thelabel[1]{\label{the:#1}}
\newcommand\theref[1]{Theorem~\ref{the:#1}}
\newcommand\lemlabel[1]{\label{lem:#1}}
\newcommand\lemref[1]{Lemma~\ref{lem:#1}}

\newcommand\smallspace[1]{\!\!\!\!\!#1\!\!\!\!\!}

\hypersetup{
  linkbordercolor=green
}

\algblockdefx[MyIf]{MyIf}{EndMyIf}%
    [1][Unknown]{{\bf if} #1{\bf :}}{}%
\algblockdefx[MyForAll]{MyForAll}{EndMyForAll}%
    [1][Unknown]{{\bf for all} #1{\bf :}}{}%
\algblockdefx[MyFunction]{MyFunction}{EndMyFunction}%
    [2][Unknown]{{\tsc #1}(#2){\bf :}}{}%
\newcommand\MyBreak{\text{{\bf break}}}

\algtext*{EndMyIf}%
\algtext*{EndMyForAll}%
\algtext*{EndMyFunction}%

\lstnewenvironment{mcode}%
  {\minipage{0.3\textwidth}%
    \lstset{%
      captionpos=b,
      language=Java,
      numbers=left,
      basicstyle=\fontsize{0.25cm}{0.1em}\ttfamily,
      columns=fullflexible,
      showstringspaces=false,
      numbersep=3pt,
      escapechar=\$,
      aboveskip=1pt,
      belowskip=1pt,
    }%
  }%
  {\endminipage}
\lstnewenvironment{mcodeH}%
  {\minipage{0.18\textwidth}%
    \lstset{%
      captionpos=b,
      language=C,
      numbers=left,
      basicstyle=\scriptsize\ttfamily,
      columns=fullflexible,
      showstringspaces=false,
      numbersep=3pt,
      escapechar=\$,
      aboveskip=1pt,
      belowskip=1pt,
    }%
  }%
  {\endminipage}
\lstnewenvironment{mcodeF}%
  {\minipage{0.18\textwidth}%
    \lstset{%
      captionpos=b,
      language=C,
      numbers=left,
      basicstyle=\scriptsize\ttfamily,
      columns=fullflexible,
      showstringspaces=false,
      numbersep=1pt,
      escapechar=\$,
      aboveskip=1pt,
      belowskip=1pt,
    }%
  }%
  {\endminipage}
\newcommand\chl[1]{\underline{#1}}

\newcommand\highlight[1]{{\color{blue}#1}}
\newcommand\todo[1]{
{\color{red}{\em#1}}
}

\ifdefined\istechreport
  \newcommand\techreport[1]{#1}
  \newcommand\techreporthide[1]{}
\else
  \newcommand\techreport[1]{}
  \newcommand\techreporthide[1]{#1}
\fi

\pgfplotscreateplotcyclelist{CycleList1}{%
  blue,every mark/.append style={fill=blue!80!black},mark=*\\%
  red,every mark/.append style={fill=red!80!black},mark=square*\\%
  brown!60!black,every mark/.append style={fill=brown!80!black},mark=diamond*\\%
}
\pgfplotscreateplotcyclelist{CycleList2}{%
  blue,every mark/.append style={fill=blue!80!black},mark=*\\%
  red,every mark/.append style={fill=red!80!black},mark=square*\\%
  brown!60!black,every mark/.append style={fill=brown!80!black},mark=diamond*\\%
  black,mark=asterisk\\%
  blue,every mark/.append style={fill=blue!80!black},mark=triangle*\\%
  red,mark=square\\%
  black,mark=o\\%
  blue,densely dashed,every mark/.append style={fill=blue!80!black},mark=pentagon*\\%
  red,mark=pentagon\\%
  black,mark=otimes\\%
  red,mark=diamond\\%
}

\newcommand\ExpLOCRatio{$\sfrac{L_S}{\overline{L_I}}$}
\newcommand\ExpLocS{$L_S$}
\newcommand\ExpLocI{$\overline{L_I}$}

\newcommand\printpercent[2]{\FPeval\result{round(#1*100/#2,1)}\result\%}

\makeatletter
\define@key{expkey}{name}{\gdef\MyExpName{#1}}
\define@key{expkey}{total}{\gdef\MyExpTotal{#1}}
\define@key{expkey}{correct}{\gdef\MyExpCorrect{#1}}
\define@key{expkey}{inefficient}{\gdef\MyExpInefficient{#1}}
\define@key{expkey}{algos}{\gdef\MyExpAlgos{#1}}
\define@key{expkey}{refines}{\gdef\MyExpRefines{#1}}
\define@key{expkey}{uncat}{\gdef\MyExpUncat{#1}}
\define@key{expkey}{inputs}{\gdef\MyExpInputs{#1}}
\define@key{expkey}{locratio}{\gdef\MyExpLOCRatio{#1}}
\define@key{expkey}{nondets}{\gdef\MyExpNonDets{#1}}
\define@key{expkey}{obsspec}{\gdef\MyExpObsSpec{#1}}
\define@key{expkey}{obsimpl}{\gdef\MyExpObsImpl{#1}}
\define@key{expkey}{spend}{\gdef\MyExpSpend{#1}}
\define@key{expkey}{mapall}{\gdef\MyExpMapAll{#1}}
\define@key{expkey}{mapfirst}{\gdef\MyExpMapFirst{#1}}
\define@key{expkey}{timeavg}{\gdef\MyExpTimeAvg{#1}}
\define@key{expkey}{timemax}{\gdef\MyExpTimeMax{#1}}
\makeatother

\newcommand\expkeyreset{
  \setkeys{expkey}{name=-,total=-,correct=-,inefficient=-,algos=-,uncat=-,
    inputs=-,nondets=-,obsspec=-,obsimpl=-,mapall=-,mapfirst=-,timeavg=-,
    timemax=-, spend=0, refines=-, locratio=-
  }
}
\newcommand\experiment[1]{%
  \expkeyreset%
  \setkeys{expkey}{#1}%
  \MyExpName 
  & \MyExpCorrect \; (\ifthenelse{\equal{\MyExpTotal}{-}}{--\%}{\printpercent{\MyExpCorrect}{\MyExpTotal}})
  & \MyExpInefficient \; (\printpercent{\MyExpInefficient}{\MyExpCorrect})
  & \MyExpAlgos & \MyExpRefines & \MyExpInputs
  & \MyExpNonDets
  & \MyExpLOCRatio &
  \MyExpObsSpec & \MyExpObsImpl 
  & \MyExpMapAll 
  & \MyExpTimeAvg & \MyExpTimeMax \\ \hline
}

\newenvironment{myitemize}{\begin{list}{$\bullet$}
{\setlength{\topsep}{1mm}\setlength{\itemsep}{0.25mm}
\setlength{\parsep}{0.1mm}
\setlength{\itemindent}{0mm}\setlength{\partopsep}{0mm}
\setlength{\labelwidth}{15mm}
\setlength{\leftmargin}{4mm}}}{\end{list}}

\begin{abstract}
Providing feedback on programming assignments manually 
is a tedious, error prone, and time-consuming task. In this paper, we
motivate and address the problem of generating feedback on performance
aspects in introductory programming assignments. 
We studied a large number of functionally correct student solutions to
introductory programming assignments and observed:
(1) There are different algorithmic strategies, with varying levels of
efficiency, for solving a given problem.
These different strategies merit different feedback. 
(2) The same algorithmic strategy can be implemented in countless different
ways, which are not relevant for reporting feedback on the student program.

We propose a light-weight programming language extension that allows a teacher
to define an algorithmic strategy by specifying certain key values that should
occur during the execution of an implementation.
We describe a dynamic analysis based approach to test whether a student's
program matches a teacher's specification.
Our experimental results illustrate the effectiveness of both our
specification language and our dynamic analysis. On one of our
benchmarks consisting of 2316 functionally correct implementations to 3
programming problems, we identified 16 strategies that we were able to describe
using our specification language
(in 95 minutes after inspecting 66, i.e., around 3\%, implementations).
Our dynamic analysis correctly matched each implementation with its corresponding
specification, thereby automatically producing the intended feedback.

\end{abstract}

\category{D.2.5}{SOFTWARE ENGINEERING}{Testing and Debugging}
\category{I.2.2}{ARTIFICIAL INTELLIGENCE}{Automatic Programming}[Automatic analysis of algorithms]
\terms{Algorithms, Languages, Performance.}
\keywords{Education, MOOCs, performance analysis, trace specification, dynamic analysis.}

\section{Introduction}\seclabel{Introduction}
Providing feedback on programming assignments is a very tedious, 
error-prone, and time-consuming task for a human teacher, even in a standard classroom setting.
With the rise of Massive Open Online Courses (MOOCs)~\cite{masters:MOOCs}, which have
a much larger number of students, this challenge is even more pressing. 
Hence, there is a need to introduce automation around this
task. Immediate feedback generation through automation can also enable new pedagogical benefits
such as allowing resubmission opportunity to students who submit imperfect solutions
and providing immediate diagnosis on class performance to a teacher who can then adapt her instruction accordingly~\cite{education:cacm}.

Recent research around automation of feedback
generation for programming problems
has focused on guiding students to functionally correct programs
either by providing counterexamples~\cite{nikolai:icse13} (generated
using test input generation tools) or
generating repairs~\cite{singh:pldi13}. 
However, non-functional aspects of a program, especially performance, are also important.
We studied several programming sessions of students who submitted
solutions to introductory \csharp programming problems on the \pex~\cite{PexForFun}
platform. In such a programming
session, a student submits a solution to a specified programming problem and receives a counterexample based
feedback upon submitting a functionally incorrect attempt (generated using Pex~\cite{pex:tap2008}). The student
may then inspect the counterexample and submit a revised attempt. This 
process is repeated until the student submits a functionally correct
attempt or gives up. We studied 24 different problems, and
observed that of the 3993 different programming sessions,
3048 led to functionally correct solutions. However,
unfortunately, on average around 60\% of these functionally correct solutions had
(different kinds of) performance problems. In this paper, we present a
methodology for semi-automatically generating appropriate performance
related feedback for such functionally correct solutions. 

From our study, we made two observations that form the basis of
our semi-automatic feedback generation methodology. 
(i) There are different {\em algorithmic strategies} with varying levels of
  efficiency, for solving a given problem.
Algorithmic strategies capture the global high-level insight of a solution to a
programming problem, while also defining key performance characteristics of the
solution. Different strategies merit different feedback.
(ii) The same algorithmic strategy can be implemented in countless
  different ways. These differences originate from local low-level implementation choices and are
not relevant for reporting feedback on the student program. 

In order to provide meaningful feedback to a student it is important to
identify what algorithmic strategy was employed by the student
program. A profiling based approach that measures
running time of a program or use of static bound
analysis techniques~\cite{speed:popl09,speed:pldi10}
is not sufficient for our purpose, because different
algorithmic strategies that necessitate different feedback may have
the same computational complexity. Also, a simple pattern
matching based approach is not sufficient because the same algorithmic
strategy can have syntactically different implementations. 

Our key insight is that the algorithmic strategy employed by a program
can be identified by observing the values computed during the execution of the
program. We allow the teacher to specify an algorithmic strategy by
simply annotating (at the source code level) certain key values computed by a sample program
(that implements the corresponding algorithm strategy) using a new
language construct, called {\em observe}. Fortunately, the number of different algorithmic strategies for
introductory programming problems is often small (at most 7 per
problem in our experiments). These can be easily enumerated by the teacher in an iterative
process by examining any student program that does not match
any existing algorithmic strategy; we refer to each such step
in this iterative process as an {\em inspection} step.

We propose a novel dynamic analysis that decides whether the student
program (also referred to as an {\em implementation}) matches an algorithm
strategy specified by the teacher in the form of an annotated program
(also referred to as a {\em specification}). Our dynamic analysis 
executes a student's implementation and the teacher's specification to
check whether the key values computed by the specification also occur 
in the corresponding traces generated from the implementation. 

We have implemented the proposed framework in \csharp and evaluated our
approach on 3 pre-existing programming problems on \pex (attempted by several
hundreds of students) and on 21 new problems that we hosted on \pex as part of
a programming course (attempted by 47 students in the course).
Experimental results show that:
(i) The manual teacher effort required to specify various algorithmic
strategies is a small fraction of the overall task that our system automates.
In particular, on our MOOC style benchmark of  2316 functionally correct
implementations to 3 pre-existing programming problems, we specified 16
strategies in 95 minutes after inspecting 66 implementations.
On our standard classroom style benchmark of 732 functionally correct
implementations to 21 programming problems, we specified 66
strategies in 266 minutes after inspecting 149
implementations.
(ii) Our methodology for specifying and matching algorithmic strategies is both
expressive and precise.
In particular, we were able to specify all 82 strategies using our
specification language and our dynamic analysis correctly matched each
implementation with the intended strategy.

This paper makes the following contributions:
\begin{myitemize}
\item We observe that there are different algorithmic strategies used in
  functionally correct attempts to introductory programming
  assignments; these strategies merit different performance related feedback.
\item We describe a new language construct, called {\em
    observe}, for specifying an algorithmic strategy (\secref{Language}).
\item We describe a dynamic analysis based approach to test whether a student's
  implementation matches the teacher's specification (\secref{Matching}).
\item Our experimental results illustrate the effectiveness of
  our specification language and dynamic analysis (\secref{Experiments}).
\end{myitemize}

\section{Overview}\seclabel{Overview}

\begin{figure*}[!htpb]
  \centering
  \scriptsize
  
\begin{tabular}{ccc}
  {\begin{mcode}
bool Puzzle(string s, string t) {
  if (s.Length != t.Length) return false;

  foreach (Char ch in s.ToCharArray()){
    if ($\chl{countChars(s, ch)}$
           != $\chl{countChars(t, ch)}$){
      return false;
    }}
  return true;}

int countChars(String s, Char c){
  int number = 0;

  foreach (Char ch in s.ToCharArray()){
    if ($\chl{ch}$ == c){
      number++;
    }}
  return number;}
  \end{mcode}} &
\begin{tabular}{c}
  {\begin{mcode}
bool Puzzle(string s, string t) {
  if (s.Length != t.Length)
    return false;
  else
    return s.All(c =>
      s.Where(c2 => $\chl{c2}$ == c).$\chl{Count()}$ ==
      t.Where(c2 => $\chl{c2}$ == c).$\chl{Count()}$
      );
}
  \end{mcode}} \\
(b) Counting/Library (\ex{C2}) \\ \hline
  {\begin{mcode}
bool Puzzle(string s, string t) {
  if(s.Length != t.Length) return false;
  foreach (var item in s) {
    if($\chl{s.Split}$(item).$\chl{Length}$
        != $\chl{t.Split}$(item).$\chl{Length}$)
    return false;
  }
  return true; }
  \end{mcode}}
\end{tabular} &
%
   {\begin{mcode}
int BinarySearch(List<char> xs, char y) {
  int low = 0, high = xs.Count;
  while (low < high) {
    int mid = (high - low) / 2 + low;
    if (y < xs[mid]) high = mid;
    else if (y > xs[mid]) low = mid + 1;
    else return mid;}
  return low;}

char[] Sort(string xs) {
  var res = new List<char>();
  foreach (var x in xs) {
    var pos = BinarySearch(res, x);
    res.Insert(pos, x);}
  return res.ToArray(); }

bool Puzzle(string s, string t) {
  return $\chl{String.Join("", Sort(s))}$
    == String.Join("", Sort(t)); }
   \end{mcode}} \\
(a) Counting/Manual (\ex{C1}) & (c) Counting/Split (\ex{C3}) &
(d) Sorting/Binary Insertion (\ex{S1}) \\ \hline
  {\begin{mcode}
bool Puzzle(string s, string t) {
  if (s.Length != t.Length) return false;
  char[] sa = s.ToCharArray();
  char[] ta = t.ToCharArray();
  for (int j=0; j < sa.Length; j++) {
    for (int i=0; i<sa.Length - 1;i++) {
      if (sa[i]<sa[i+1]){ char temp=sa[i];
        sa[i]=sa[i+1]; $\chl{sa[i+1]=temp}$;}
      if (ta[i]<ta[i+1]){ char temp=ta[i];
        ta[i] = ta[i+1]; ta[i+1] = temp;}
    }}
  for (int k = 0; k < sa.Length; k++) {
    if (sa[k] != ta[k]) return false; }
  return true; }
  \end{mcode}} &
\begin{tabular}{c}
  {\begin{mcode}
bool Puzzle(string s, string t) {
  var sa = s.ToCharArray();
  var ta = t.ToCharArray();
  $\chl{Array.Sort(sa)}$;
  Array.Sort(ta);
  return sa.SequenceEqual(ta);}
  \end{mcode}} \\
 (f) Sorting/Library (\ex{S3})\\ \hline
  {\begin{mcode}
bool Puzzle(string s, string t) {
  if (s.Length != t.Length) return false;
  foreach (char c in t.ToCharArray()) {
    int index = s.IndexOf(c);
    if (index < 0) return false;
    $\chl{s = s.Remove(index, 1)}$; }
  return true; }
  \end{mcode}}
\end{tabular} &
  {\begin{mcode}
bool Puzzle(string s, string t) {
  return IsPermutation(s, t);
}
bool IsPermutation(String s, string t) {
  if (s == t) return true;
  if (s.Length != t.Length) return false;
  int index = t.IndexOf(s[0]);
  if (index == -1) return false;

  s = s.Substring(1);
  $\chl{t = t.Remove(index, 1)}$;

  return IsPermutation(s, t);
}
   \end{mcode}} \\
(e) Sorting/Bubble (\ex{S2}) & (g) Removing/Library (\ex{R1})
& (h) Removing/Recursive (\ex{R2}) \\ \hline
   {\begin{mcode}
bool Puzzle(string s, string t) {
  char[] sc = s.ToCharArray();
  char[] tc = t.ToCharArray();
  Char c = '#';
  if(sc.Length!=tc.Length) return false;
  for(int i=0;i<sc.Length;i++) {
    c = sc[i];
    for(int j=0;j<tc.Length;j++) {
      if(tc[j]==c){
        $\chl{tc[j]='\#'}$;
        break;}
      if(j==tc.Length-1) {
        return false; }}}
  return true; }
   \end{mcode}} &
 {\begin{mcode}
Puzzle(string s, string t) {
  if ($\chl{nd1}$) { string tt = t; t = s; s = tt; }
  for (int i = 0; i < s.Length; ++i) {
    int cnt1 = 0, cnt2 = 0;
    for (int j = 0; j < s.Length; ++j) {
      if (s[j] == s[i]) {
        if ($\chl{nd2}$) $\chl{observe}$(s[j]);
        cnt1++;
      }}
    if (!$\chl{nd2}$) $\chl{observeFun}$(Split());
    $\chl{observe}$($\chl{nd2}$ ? cnt1 : cnt1 + 1);
    for (int j = 0; j < t.Length; ++j) {
      if (t[j] == s[i]) {
        if ($\chl{nd2}$) $\chl{observe}$(t[j]);
        cnt2++;
      }}
    if (!$\chl{nd2}$) $\chl{observeFun}$(Split());
    $\chl{observe}$($\chl{nd2}$ ? cnt2 : cnt2 + 1); }}
  \end{mcode}} &
 \begin{tabular}{c}
   {\begin{mcode}
Puzzle(string s, string t) {
  if ($\chl{nd1}$) s = s.ToUpperInvariant();
  char[] ca = s.ToCharArray();
  Array.Sort(ca);
  if ($\chl{nd2}$) Array.Reverse(ca);
  $\chl{observe}$(ca);
}
   \end{mcode}} \\
(k) Sorting Specification (\ex{SS}) \\ \hline
   {\begin{mcode}
Puzzle(string s, string t) {
  if ($\chl{nd1}$) {string tt = t; t = s; s = tt;}
  for (int i = 0; i < s.Length; ++i) {
    if (s.Substring(i) == t) return;
    int ni = $\chl{nd2}$ ? i : s.Length - i - 1;
    int k = $\chl{nd3}$ ? t.IndexOf(s[ni])
                     : t.LastIndexOf(s[ni]);
    t = t.Remove(k, 1);
    $\chl{observe}$(t, $\chl{CompareLetterString}$); }}
   \end{mcode}}
 \end{tabular} \\
(i) Removing/Manual (\ex{R3}) & (j) Counting Specification (\ex{CS}) &
(l) Removing Specification (\ex{RS}) \\ \hline
 {\begin{mcode}
bool Puzzle(string s, string t) {
  if(s.Length != t.Length) return false;
  Char[] taux = t.ToCharArray();
  for(int i = 0; i < s.Length; i++) {
    Char sc = s[i];
    Boolean exists = false;
    for(int j = 0; j < t.Length; j++) {
      if(sc == taux[j]) {
        exists = true; $\chl{taux[j] = ' '}$;
        break; }}
    if(exists == false) return false; }
  return true; }
    \end{mcode}} &
 {\begin{mcode}
bool CompareLetterString(string a,string b){
  var la = a.Where(x=>char.IsLetter(x));
  var lb = b.Where(x=>char.IsLetter(x));
  return la.SequenceEqual(lb);
}
    \end{mcode}} &
 {\begin{mcode}
bool Puzzle(string s, string t) {
  if (s.Length !=t.Length) return false;
  int [] cs=new int [256];
  int [] ct=new int [256];
  for(int i=0;i<s.Length;i++)
    $\chl{cs[(int) s[i]]++}$;
  for(int i=0;i<t.Length;++i)
    $\chl{ct[(int) t[i]]++}$;
  for (int i=0;i<256;i++)
    if(cs[i] != ct[i]) return false;
  return true;
}
 \end{mcode}} \\
(m) Removing/Manual 2 (\ex{R4}) & (n) Custom Data Equality (\ex{CDE}) &
(o) Efficient/Compare (\ex{E1}) \\  \hline
  {\begin{mcode}
bool Puzzle(string s, string t) {
  if (s.Length != t.Length)
    return false;
  char[] cs = s.ToCharArray();
  char[] ct = t.ToCharArray();
  int[] hash = new int[256];
  for (int i=0; i<255; ++i) {
    hash[i] = 0;
  }
  foreach (char ch in cs) {
    $\chl{hash[(int)ch]++}$; }
  foreach (char ch in ct) {
    $\chl{hash[(int)ch]-{-}}$; }
  for (int i=0; i<255; ++i) {
    if (hash[i] < 0)
      return false; }
  return true; }
    \end{mcode}} &
  {\begin{mcode}
void Puzzle(string s, string t) {
  if ($\chl{nd1}$){string tt = t; t = s; s = tt;}
  int[] cs = new int[256],ct = new int[256];
  $\chl{cover}$(ToCharArray());
  $\chl{cover}$(ToCharArray());
  $\chl{cover}$(255);
  for (int i = 0; i < s.Length; ++i) {
    cs[(int)s[i]]++;
    $\chl{observe}$(cs); }
  for (int i = 0; i < t.Length; ++i) {
    if ($\chl{nd2}$) { cs[(int)t[i]]--;
      $\chl{observe}$(cs);
    } else { ct[(int)t[i]]++;
      $\chl{observe}$(ct);
    }
  }
  $\chl{cover}$(255); }
    \end{mcode}} &
 {\begin{mcode}
bool Puzzle(string s, string t) {
  if (s.Length != t.Length) return false;
  string cp = t;
  for(int i=0; i<s.Length; i++) {
    char k = s[i]; bool found = false;
    for(int j=0; j<cp.Length; j++) {
      if (cp[j] == k) {
        if (j == 0) {
          $\chl{cp = (Char)0+cp.Substring(1)}$;}
        else if(j == cp.Length - 1) {
          $\chl{cp = cp.Substring(0, j)+(Char)0}$;}
        else {
          $\chl{cp = cp.Substring(0, j) +}$
            $\chl{(Char) 0 + cp.Substring(j + 1)}$;}
        found = true; break; }}
    if (!found) return false; }
  return true; }
   \end{mcode}} \\
(p) Efficient/Difference (\ex{E2}) & (q) Efficient Specification (\ex{ES}) &
(r) Removing/Separate computation (\ex{R5})

\end{tabular}

  \caption{Running example: Implementations and Specifications of Anagram assignment.}
  \figlabel{Motivation}
\end{figure*}

In this section we motivate our problem definition and various aspects
of our solution by means of various examples.

\subsection{Motivation}
\figref{Motivation} shows our running examples.
Programs (a)-(i) (\ex{IM}) show some sample implementations for the {\it anagram
  problem} (which involves testing whether the two input strings can be permuted to become equal)
  on the \pex platform. All 9 programs are examples of inefficient
implementations, because of their {\em quadratic asymptotic complexity}.
An efficient solution, for example, is to first collect (e.g. in an array) the
number of occurrences of each character in both strings and then compare them,
leading to {\em linear asymptotic complexity}.

\paragraph{Algorithmic strategies}
In implementations \ex{IM} we identify {\it three different algorithmic
  strategies}.
Implementations \ex{C1}-\ex{C3} iterate over
one of the input strings and for each character in that string count the
occurrences of that character in both strings
({\em counting strategy}).
Implementations \ex{S1}-\ex{S3} sort both
input strings and check if they are equal ({\em sorting strategy}).
Implementations \ex{R1}-\ex{R3} iterate over
one of the input strings and remove corresponding characters from the other
string ({\em removing strategy}).

\paragraph{Implementation details}
An algorithmic strategy can have several implementations.
In case of counting strategy: Implementation \ex{C1} calls manually implemented
method $\vart{countChar}$ to count the number of characters in a string (lines 5 and 6), while
implementation \ex{C2} uses a special \csharp construct (lines 6 and 7) and
implementation \ex{C3} uses the library function $\vart{Split}$ for that task (lines 4 and 5).
In case of the sorting strategy:
Implementation \ex{S1} employs binary insertion sort,
while implementation \ex{S2} employs bubble
sort and implementation \ex{S3} uses a library call (lines 4 and 5). We also
observe different ways of removing a character from a string
in implementations \ex{R1}-\ex{R3}.

\paragraph{Desired feedback}
Each of the three identified strategies requires separate feedback (independent
of the underlying implementation details),
to help a student understand and fix
the performance issues.
For the first strategy (implementations \ex{C1}-\ex{C3}),
a possible feedback might be:
{\em "Calculate the number of characters in each string in a preprocessing
  phase, instead of each iteration of the main loop"};
for the second strategy (\ex{S1}-\ex{S3}), it might be:
{\em "Instead of sorting input strings, compare the number of character
  occurrences in each string"};
and for the third strategy (\ex{R1}-\ex{R3}):
{\em "Use a more efficient data-structure to remove characters"}.

\subsection{Specifying Algorithmic Strategies}

\paragraph{Key values}
Our key insight is that different implementations that employ the
same algorithmic strategy generate the same {\it key
  values} during their execution on the same input.
For example, (the underlined expressions in) the implementations \ex{C1} and \ex{C2} both produce the
key value sequence $(a,b,a,2,b,a,a,2,a,b,a,1,b,a,a,1,a,b,a,2,b,a,a,2)$ on the input strings ``aba'' and ``baa''.

Our framework allows a teacher to describe an algorithmic strategy by simply annotating certain expressions
in a sample implementation using a special language statement \observes.
Our framework decides whether or not a student implementation matches a teacher
specification by comparing their execution traces on
common input(s).
We say that an implementation $\progb$ \emph{matches} a specification $\prog$, if
(1) the execution trace of $\prog$ is a subsequence of the execution trace of
$\progb$, and
(2) for every observed expression in $\prog$ there is an expression in $\progb$
that has generated the same values.
We call this matching criterion a {\em trace embedding}. The notion of
trace embedding establishes a fairly strong connection
between specification and implementation: basically, both programs produce the
same values at corresponding locations in the same order.
Our notion of trace embedding is an adaptation of the notion of a simulation
relation~\cite{sim:milner71} to dynamic analysis.

\paragraph{Non-deterministic choice}
Because of minor differences between implementations of the same strategy,
key-values can differ.
For example, implementation \ex{C3} uses a library function to obtain the
number of characters, while implementations \ex{C1} and \ex{C2} explicitly
count them by explicitly iterating over the string.
Moreover, counted values in \ex{C3} are incremented by one compared to those in
\ex{C1} and \ex{C2}.
\ex{C3} thus yields a different, but related, trace
$(\tsc{Split},3,\tsc{Split},3,\tsc{Split},2,\tsc{Split},2,\tsc{Split},3,\tsc{Split},3)$ on input strings "aba" and "baa".
To address variations in implementation details, we include a {\it non-deterministic} choice construct in our
specification language. The non-determinism is fixed before the execution; thus such a
choice is merely a syntactic sugar to succinctly represent {\em multiple similar specifications}
($n$ non-deterministic variables $=$ $2^n$ specifications).

\paragraph{Specifications}
\ex{CS}, \ex{SS}, and \ex{RS} denote the specifications for the counting
strategy (used in implementations \ex{C1}-\ex{C3}), sorting strategy (used in
\ex{S1}-\ex{S3}), and removing strategy (used in \ex{R1}-\ex{R3})
respectively. In \ex{CS}, the teacher observes the characters that are iterated
over (lines 7 and 14),  the results of counting the characters (lines 11 and
18), and use of  library function $\vart{Split}$ (lines 10 and 17).
Also the teacher uses {\em non-deterministic} Boolean variables: $\vart{nd1}$
(line 2) to choose the string over which the main loop iterates (as the input
strings are symmetric in the anagram problem); and $\vart{nd2}$ to choose
between manual and library function implementations (which also decides on
observed counted values on lines 18 and 11).
In \ex{SS} the teacher observes one of the input strings after sorting, and
non-deterministically allows that implementations convert input string to
upper-case ($\vart{nd1}$ on line 2), and sort the string in reverse order
($\vart{nd2}$ on line 5).
Notice that it is enough to observe only one sorted input, as in the case that
the input strings are anagrams, the sorted strings are the same.
In \ex{RS} the teacher observes the string with removed characters and
non-deterministcally chooses which string is iterated ($\vart{nd1}$ on line 2),
direction of the iteration ($\vart{nd2}$ on line 5) and the direction in which
the remove candidate is searched for ($\vart{nd3}$ on line 6).

\section{Specifications and Implementations}\seclabel{Language}

In this section we introduce an imperative programming language \langl that
supports standard constructs for writing implementations, and has some novel
constructs for writing specifications. 

\subsection{The Language \langl}

\begin{figure}[t]
\vspace{-0.3cm}
    \begin{eqnarray*}
      \mbox{Expression } \expr & \BNFas & \dexpr \BNFalt \var \BNFalt \var_1 ~ \opbin ~ \var_2 \BNFalt \opun ~ \var \BNFalt \var_1[\var_2] \\
      \mbox{Statement } \stmt & \BNFas & \var \assign \expr \BNFalt \var_1 [\var_2] \assign \expr \BNFalt \var \assign \libs(\var_1,\dots,\var_n) \\
      & \BNFalt & \stmt_0 \compose \stmt_1 \BNFalt \keyw{while} ~ \var ~ \keyw{do} ~ \stmt  \BNFalt \keyw{skip}\\
      & \BNFalt & \keyw{if} ~ \var ~ \keyw{then} ~ \stmt_0 ~ \keyw{else} ~ \stmt_1\\
      & \BNFalt & \observes(\var, [\equality])\\
      & \BNFalt & \observesFun(\libs[\var_1,\dots,\var_n], [\equality])
    \end{eqnarray*}
  \caption{The syntax of \langl language.}
  \figlabel{SyntaxL}
\end{figure}

The syntax of the language \langl is stated in \figref{SyntaxL}.
We discuss the features of the language below.

\paragraph{Expressions}
{\em A data value} $\dexpr$ is any value from some {\em data domain} set $\datadom$, which
contains all values in the language (e.g., in \csharp, all integers,
characters, arrays, hashsets, ...).
A {\em variable} $\var$ belongs to a (finite) set of variables $\progvar$.
An \emph{expression} is either a data value $\dexpr$, a variable $\var$, an
operator applied to variables $\var_1,\var_2$ or an array access
$\var_1 [\var_2]$.
Here, $\opbin$ represents a set of binary operators (e.g., $+,\cdot,=,<,\land$)
and $\opun$ a set of unary operators (e.g., $\lnot,|\cdot|$).
We point out that the syntax of \langl ensures that programs are in {\it three
  address code}:
operators can only be applied to variables, but not to arbitrary expressions.
The motivation for this choice is that three address code enables us to observe
any expression in the program by observing only variables.
We point out that any program can be (automatically) translated into three address code by
assigning each subexpression to a new variable.
For example, the statement $\var_1 \assign \var_2 + (a + b)$ can be translated
into three-address code as follows: $\var_3 \assign a + b; \var_1 \assign
\var_2 + \var_3$.
This enables us to observe the subexpression $a + b$ by observing $\var_3$.

\paragraph{Statements}
The statements of \langl allow to build simple imperative programs:
assignments (to variables and array elements), \keyw{skip} statement,
composition of statements, looping and branching constructs.
We also allow library function calls in \langl, denoted by
$\var \assign \libs(\var_1,\dots,\var_n)$, where $\libs \in \proglib$ is a library function
name, from a set of all library functions $\proglib$.
There are two special \observes constructs, which are only available to the
teacher (and not to the student).
We discuss the observe statements in \secref{Specification} below.
We assume that each statement $\stmt$ is associated with a {\it unique program
  location} $\loc$, and write $\loc: \stmt$.

\paragraph{Functions}
For space reasons we do not define functions here.
We could easily extend the language to (recursive) functions.
In fact we allow (recursive) functions in our implementation.

\paragraph{Semantics}
We assume some standard imperative semantics to execute programs written in the
language \langl (e.g., for \csharp we assume the usual semantics of \csharp).
The two \observes statements have the same semantic meaning of the \keyw{skip}
statement.

\paragraph{Computation domain}
We extend the data domain $\datadom$ by a special symbol $\dontcare$, which we
will use to represent {\em any} data value.
We define the {\em computation domain} $\valdom$ associated with our language \langl as
$\valdom = \datadom \cup (\proglib \times \datadom^*)$.
We assume the data domain $\datadom$ is equipped with some equality relation
$=_\datadom \subseteq \datadom \times \datadom$ (e.g., for \csharp we have
$(x,y)\in \ =_\datadom$ iff $a$ and $b$ are of the same type and comparison by
the {\tt equals} method returns {\tt true}).
We denote by $\equalities = 2^{\valdom \times \valdom}$ the set of all relations
over $\valdom$.
We define a {\em default equality relation}
$\equality_\default \in \equalities$ as follows:
We have $(x,y) \in \equality_\default$ iff $x = ?$ or $y = ?$ or $(x,y) \in =_\datadom$.
We have $((\libs,x_1,\dots,x_n),(\libs',y_1,\dots,y_n)) \in \equality_\default$
iff $\libs =\libs'$ and  $(x_i,y_i) \in \equality_\default$
for all $1 \le i \le n$.

\paragraph{Computation trace}
A \emph{(computation) trace} $\trace$ over some finite set of (programming) locations $\progloc$ 
is a finite sequence of location-value pairs $(\progloc \times \valdom)^*$.
We use the notation $\traces_\progloc$ to denote the set of all computation traces over $\progloc$.
Given some $\trace \in \traces_\progloc$ and $\progloc' \subseteq \progloc$,  we
denote by $\trace|_{\progloc'}$ the sequence that we obtain by deleting all
pairs $(\loc,\val)$ from $\trace$, where $\loc \not\in \progloc'$.

\subsection{Student Implementation}\seclabel{Implementation}

In the following we describe how a computation trace $\trace$ is generated for a
student implementation $\progb$ on a given input $\state$.
The computation trace is initialized to the empty sequence $\trace = \epsilon$.
Then the implementation is executed on $\state$ according to the semantics of
\langl.
During the execution we append location-value pairs to $\trace$ for every
assignment statement:
For $\loc: \var_1 \assign \expr$ or $\loc: \var_1[\var_2] \assign \expr$ we
append $(\loc,\evaluate(\var_1))$ to $\trace$ (we denote by $\evaluate(\var_1)$
the current value of $\var_1$).
We point out that we add the complete array $\evaluate(\var_1)$ to the
trace for an assignment to an array variable $\var_1$.
For a library function call $\loc : \var \assign \libs(\var_1,\dots,\var_n)$ we
append $(\loc,(\libs,\evaluate(\var),\evaluate(\var_1),\dots,\evaluate(\var_n)))$ to $\trace$.
We denote the resulting trace $\trace$ by $\semfun{}{\progb}(\state)$.
This construction of a computation trace can be achieved by {\em instrumenting the 
implementation} in an appropriate manner. 

\subsection{Teacher Specification}\seclabel{Specification}

The teacher uses \observes and \observesFun for specifying the key values she
wants to observe during the execution of the specification
and for defining an equality relation over computation domain.
As usual the rectangular brackets `[' and `]' enclose optional arguments.

In the following we describe how a computation trace $\trace$ is generated for a
specification $\prog$ on a given input $\state$.
The computation trace is initialized to the empty sequence $\trace = \epsilon$.
Then the specification is executed according to the semantics of \langl.
During the execution we append location-value pairs to $\trace$ only for
\observes and \observesFun statements:
For $\loc : \observes(\var, [\equality])$ we append
$(\loc,\evaluate(\var))$ to $\trace$ (we denote by $\evaluate(\var)$ the
current value of $\var$).
For $\loc : \observesFun(\libs[\var_1,\dots,\var_n],[\equality])$
we append $(\loc,(\libs,x_1,\dots,x_n))$ to $\trace$,
where $x_i = \evaluate(\var_i)$, if the $i^\text{th}$ argument to \libs has been
specified, and $x_i = ?$, if it has been left out.
We denote the resulting trace $\trace$ by $\semfun{}{\prog}(\state)$.

\paragraph{Custom data equality}
The possibility of specifying an equality relation $\equality \in \equalities$ at some location
$\loc$ is very useful for the teacher.
We point out that in practice the teacher has to specify $\equality$ by an
equality function $(\valdom \times \valdom) \to \{\keyw{true},\keyw{false}\}$.
The teacher can use $\equality$ to define the equality of {\em similar computation
  values}.
We show its usage on examples \ex{R3} and \ex{R4} (\figref{Motivation});
both examples implement the {\em removing strategy} (discussed in
\secref{Overview}) in almost identical ways --- the only difference is on lines
10 and 9, respectively, where implementations use different characters to
denote a character removed from a string: \incode{'\#'} and \incode{' '}.
In specification \ex{RS} the teacher uses the equality function
$\incode{CompareLetterString}$ (defined in \ex{CDE}) --- which
compares only letters of two strings --- to define value representations of
both implementations, regardless of used characters, as equal.

We call a function $\compare: \progloc \rightarrow \equalities$ a
\emph{comparison function}.
We define $\compare(\loc) = \equality$ for every
statement $\loc : \observes(\var,\equality)$ or
$\loc : \observesFun(\libs[\var_1,\dots,\var_n],\equality)$.
For statements, where $[\equality]$ has been left out,
we set the default value $\compare(\loc) = \equality_\default$.

\paragraph{Non-deterministic choice}
We assume that the teacher can use some finite set of {\it non-deterministic}
Boolean variables $\binvars = \{\nondet_1,\dots,\nondet_n \} \subseteq
\progvar$ (these are not available to the student).
Non-deterministic choice allows the teacher to specify variations in
implementations, as discussed in \secref{Overview}.
Non-deterministic variables are similar to the input variables, in the sense
that are assigned before program is executed.
We note that this results into $2^n$ different program behaviors for a given
input.

\section{Matching}\seclabel{Matching}

In this section, we define what it means for an implementation to {\it (partially) match} or {\it fully match} a specification and describe the corresponding matching algorithms.
The teacher has to determine for each specification which definition of matching has to be applied.
In case of partial matching we speak of \emph{inefficient specifications} and in case of full matching of \emph{efficient specifications}.

\subsection{Trace Embedding}\seclabel{Embedding}

We start out by discussing the problem of {\em Trace Embedding}  that we use as
a building block for the matching algorithms.

\paragraph{Subsequence}
We call
$\criterion \in \{\Partial,\Full\}$ a \emph{matching criterion}.
Let $\trace_1 = (\loc_1,\val_1) (\loc_2,\val_2) \cdots (\loc_n,\val_n)$ and
$\trace_2 = (\loc_1',\val_1') (\loc_2',\val_2') \cdots (\loc_m',\val_m')$ be
two computation traces over some set of locations $\progloc$, and let $\compare$ be some
comparison function (as defined in \secref{Specification}).
We say $\trace_1$ is a \emph{subsequence} of $\trace_2$ w.r.t.~to $\compare,\criterion$,
written $\trace_1 \subseqq_{\compare,\criterion} \trace_2$, if
there are indices $1 \leq k_1 < k_2 < \cdots < k_n \leq m$ such that for all
$1 \leq i \leq n$ we have $\loc_i = \loc_{k_{i}}'$ and $(\val_i, \val_{k_{i}}') \in \compare(\loc_i)$;
in case of $\criterion = \Full$ we additionally require that $\trace_1$ and $\trace_2|_{\{\loc_1,\ldots,\loc_n\}}$ have the same length.
We refer to $(\val_i, \val_{k_{i}}') \in \compare(\loc_i)$  as \emph{equality check}.
If $\compare(\loc_i) = \Id$ (the identity relation over $\valdom$) for all $1 \le i \le n$,
we obtain the usual definition of subsequence.

Since deciding subsequence, i.e., $\trace_1 \subseqq_{\compare,\criterion} \trace_2$,
is a central operation in this paper, we state complexity of
this decision problem.
It is easy to see that deciding subsequence requires only $O(m)$ equality
checks; basically one iteration over $\trace_2$ is sufficient.

\paragraph{Mapping Function}
Let $\progloc_1$ and $\progloc_2$ be two disjoint sets of locations.
We call an injective function $\varmap : \progloc_1 \to \progloc_2$ a
\emph{mapping function}.
We lift $\varmap$ to a function
$\varmap: \traces_{\progloc_1} \to \traces_{\progloc_2}$ by applying it to every location, i.e., we set
\begin{eqnarray*}
\varmap(\trace) & = & (\varmap(\loc_1),\val_1) (\varmap(\loc_2),\val_2) \cdots\\
\text{for } \trace & = &  (\loc_1,\val_1) (\loc_2,\val_2) \cdots \in \traces_{\progloc_1}.
\end{eqnarray*}

Given a comparison function $\compare$, a matching criterion $\criterion$,
and computation traces $\trace_1 \in \traces_{\progloc_1}$ and
$\trace_2 \in \traces_{\progloc_2}$
we say that $\trace_1$ {\em can be embedded in} $\trace_2$ by $\varmap$,
iff $\varmap(\trace_1) \subseqq_{\compare \circ \varmap^{-1},\criterion} \trace_2$, and write
$\trace_1 \subseqq^{\varmap}_{\compare,\criterion} \trace_2$.
We refer to $\varmap$ as {\em embedding witness}.

Executing a program on set of assignments $\invals$ gives rise to a set of traces, one for each assignment $\state \in \invals$.
We say that the set of traces $(\trace_{1,\state})_{\state \in \invals}$ can be
embedded in $(\trace_{2,\state})_{\state \in \invals}$
by $\varmap$ iff
$\trace_{1,\state} \subseqq^\varmap_{\compare,\criterion} \trace_{2,\state}$ for all $\state \in \invals$.

\begin{definition}[Trace Embedding]\deflabel{trace-embedding}
{\bf Trace \- Embedding} is the problem of deciding for given
sets of traces $(\trace_{1,\state})_{\state \in \invals}$ and $(\trace_{2,\state})_{\state \in \invals}$,
a comparison function $\compare$, and a matching criterion $\criterion$,
if there is a witness mapping function $\varmap$, such that
$\trace_{1,\state} \subseqq^{\varmap}_{\compare,\criterion} \trace_{2,\state}$ for all $\state \in \invals$.
\end{definition}

\paragraph{Complexity}
Clearly, Trace Embedding is in NP (assuming equality checks can be done in
polynomial time): we first guess the mapping function
$\varmap: \progloc_1 \rightarrow \progloc_2$ and then check
$\trace_{1,\state} \subseqq^{\varmap}_{\compare,\criterion} \trace_{2,\state}$
for all $\state \in \invals$ (which is cheap as discussed above).
However, it turns out that Trace Embedding is
NP-complete \emph{even} for a \emph{singleton} set $\invals$,
a \emph{singleton} computation domain $\valdom$, and the \emph{full matching criterion}.
\techreporthide{We present a detailed proof, by reduction from Permutation Pattern~\cite{journals/ipl/BoseBL98},
in our technical report~\cite{technical-report}.}

\techreport{
\begin{theorem}
\label{thm:complexity-trace-embedding}
  Trace Embedding is NP-complete (assuming equality checks can be done in
  polynomial time).
\end{theorem}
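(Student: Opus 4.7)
Membership in NP is immediate: a nondeterministic algorithm guesses the mapping function $\varmap : \progloc_1 \to \progloc_2$ and, as observed above, verifies each $\trace_{1,\state} \subseqq^{\varmap}_{\compare,\criterion} \trace_{2,\state}$ in linear time.

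For NP-hardness, I would reduce from \emph{Permutation Pattern}~\cite{journals/ipl/BoseBL98}: given two permutations $\rho \in S_n$ and $\mu \in S_m$, decide whether there exist indices $i_1 < \cdots < i_n$ such that $\mu(i_k) < \mu(i_l) \Leftrightarrow \rho(k) < \rho(l)$ for all $k,l$. The trace-embedding instance I produce uses a singleton input set, a singleton computation domain (so every equality check succeeds trivially), and the full matching criterion. Taking disjoint location sets $\progloc_1 = \{a_1, \ldots, a_n\}$ and $\progloc_2 = \{b_1, \ldots, b_m\}$, I construct (suppressing the single dummy value)
\begin{eqnarray*}
  \trace_1 & = & a_1 a_2 \cdots a_n \; a_{\rho(1)} a_{\rho(2)} \cdots a_{\rho(n)},\\
  \trace_2 & = & b_1 b_2 \cdots b_m \; b_{\mu(1)} b_{\mu(2)} \cdots b_{\mu(m)}.
\end{eqnarray*}
This is clearly polynomial-time.

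For correctness, I write any candidate mapping as $\varmap(a_j) = b_{f(j)}$ for an injective $f : \{1,\ldots,n\} \to \{1,\ldots,m\}$. Every location in $\mathrm{image}(\varmap)$ occurs exactly twice in $\trace_2$, so $\trace_2|_{\mathrm{image}(\varmap)}$ has length $2n$, matching the length of $\trace_1$; thus the length requirement of the full criterion collapses matching to the sequence equality $\varmap(\trace_1) = \trace_2|_{\mathrm{image}(\varmap)}$. Matching the first halves forces $f$ to be monotonically increasing, since $b_{f(1)} \cdots b_{f(n)}$ must appear inside the sorted prefix $b_1 \cdots b_m$ of $\trace_2$ in that order; matching the second halves then produces indices $j_1 < \cdots < j_n$ with $f(\rho(k)) = \mu(j_k)$, whence $\rho(k) < \rho(l) \Leftrightarrow f(\rho(k)) < f(\rho(l)) \Leftrightarrow \mu(j_k) < \mu(j_l)$, a permutation-pattern witness. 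Conversely, given a witness $i_1 < \cdots < i_n$ for Permutation Pattern, defining $f(\rho(k)) := \mu(i_k)$ yields an order-preserving injection (by the order-isomorphism property), whose induced $\varmap$ satisfies the required trace equality.

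The main obstacle is forcing the injection $f$ to be monotonic: the naive reduction using only the ``second halves'' $a_{\rho(1)} \cdots a_{\rho(n)}$ vs $b_{\mu(1)} \cdots b_{\mu(m)}$ admits arbitrary injections $f$ and degenerates to a triviality. The identity prefixes together with the length-equality aspect of the full criterion are precisely what pin $f$ down to be increasing; after that, the second halves faithfully encode the order-isomorphism condition of Permutation Pattern.
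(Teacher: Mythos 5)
Your proof is correct and follows essentially the same route as the paper: NP membership by guess-and-check, and NP-hardness by reduction from Permutation Pattern over a singleton value domain, using an identity prefix to force the mapping to be monotone and the permutation suffix to encode the pattern-containment condition. The only (immaterial) difference is that you use the order-isomorphism formulation of Permutation Pattern where the paper uses the equivalent monotone-injection/subsequence formulation, and you spell out the correctness argument in somewhat more detail.
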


\begin{proof}
In order to show NP-hardness we reduce Permutation
Pattern~\cite{journals/ipl/BoseBL98} to Trace Embedding.
First, we formally define Permutation Pattern.
Let $n,k$ be positive integers with $k \le n$.
Let $\sigma$ be a permutation of $\{1,\cdots,n\}$ and let $\tau$ be a
permutation of $\{1,\cdots,k\}$.
We say $\tau$ \emph{occurs} in $\sigma$, if there is an injective function
$\pi: \{1,\cdots,k\} \rightarrow \{1,\cdots,n\}$ such that $\pi$ is monotone,
i.e., for all $1 \le r < s \le k$ we have $\pi(r) < \pi(s)$ and
$\pi(\tau(1))\cdots\pi(\tau(k))$ is a subsequence of
$\sigma(1)\sigma(2)\cdots\sigma(n)$.
\emph{Permutation Pattern} is the problem of deciding whether $\tau$ occurs in
$\sigma$.

We now give the reduction of Permutation Pattern to Trace Embedding.
We will construct two traces $\trace_1$ and $\trace_2$ over a singleton computation
domain $\valdom$, and over the sets of locations $\progloc_1 = \{1,\ldots,k\}$
and $\progloc_2 = \{1,\ldots,n\}$.
We set $\compare(i) = \Id$ (the identity function on $\valdom$) for every
$i \in \progloc_1$.
Because $\valdom$ is singleton, we can ignore values in the rest of the proof.
We set $\trace_1 = 12\cdots k \tau(1)\tau(2)\cdots\tau(k)$ and
$\trace_2 = 12\cdots n \sigma(1)\sigma(2)\cdots\sigma(n)$.
Because every $i \in \{1,\cdots,k\}$ occurs exactly twice in $\trace_1$ and
$\trace_2$, partial and full matching criteria are equivalent so we can ignore the difference.
We now show that $\tau$ occurs in $\sigma$ iff there is an injective function
$\varmap: \progloc_1 \rightarrow \progloc_2$ with
$\trace_1 \subseq^{\varmap} \trace_2$.
We establish this equivalence by two observations:
First, because every $i \in \{1,\cdots,k\}$ occurs exactly twice in $\trace_1$
and $\trace_2$ we have $12\cdots k \subseq^{\varmap} 12\cdots n$ and
$\tau(1)\tau(2)\cdots\tau(k) \subseq^{\varmap}
\sigma(1)\sigma(2)\cdots\sigma(n)$ iff $\trace_1 \subseq^{\varmap} \trace_2$.
Second, $12\cdots k \subseq^{\varmap} 12\cdots n$ iff $\varmap: \progloc_1
\rightarrow \progloc_2$ is monotone.
\end{proof}}

\paragraph{Algorithm}
\figref{AlgTrace} shows our algorithm, \tsc{\algtraceeq}, for the Trace Embedding problem.
A straightforward algorithmic solution for the trace embedding problem is to
simply test all possible mapping functions.
However, there is an {\it exponential number} of such mapping functions
w.r.t. to the cardinality of $\progloc_1$ and $\progloc_2$.
This exponential blowup seems unavoidable as the combinatorial search space is
responsible for the NP hardness. The {\it core element} of our algorithm is a
pre-analysis that narrows down the space of possible mapping functions
effectively.
We observe that if $\loc_2 = \varmap(\loc_1)$ and
$\trace_1 \subseqq^{\varmap}_{\compare,\criterion} \trace_2$, then there exists a trace
embedding restricted to locations $\loc_1$ and $\loc_2$, formally:
$\trace_1|_{\{\loc_1\}} \subseq^{\{\loc_1 \mapsto \loc_2\}}_{\compare,\criterion} \trace_2|_{\{\loc_2\}}$.
The algorithm uses this insight to create a (bipartite) graph
$\potential \subseteq \progloc_1 \times \progloc_2$ of potential mapping pairs
in lines 2-7.
A pair of locations $(\loc_1,\loc_2) \in \potential$ is a {\it potential
  mapping pair} iff there exists a trace embedding restricted to locations
$\loc_1$ and $\loc_2$, as described above.

The key idea in finding an embedding witness $\varmap$ is to construct a {\it
  maximum bipartite matching} in $\potential$.
A maximum bipartite matching has an edge connecting every program location from
$\progloc_1$ to a distinct location in $\progloc_2$ and thus gives rise to an
injective function $\varmap$.
We point out that such an injective function $\varmap$ does not need to be an
embedding witness, because, by observing only a single location pair at a time,
it ignores the order of locations.
Thus, for each maximum bipartite matching~\cite{Takeaki1997} $\varmap$ the algorithm checks (in lines 8-14)
if it is indeed an embedding witness.

\begin{figure}
  \begin{algorithmic}[1]
    \MyFunction[\algtraceeq]{$(\trace_{1,\state})_{\state \in \invals},(\trace_{2,\state})_{\state \in \invals},\progloc_1, \progloc_2,\compare,\criterion$}
      \State $\potential \gets \progloc_1 \times \progloc_2$
      \MyForAll[$\loc_1 \in \progloc_1,\loc_2 \in \progloc_2$]
        \MyForAll[$\state \in \invals$]
          \MyIf[$\trace_{1,\state}|_{\{\loc_1\}} \not\subseqq^{\{\loc_1 \mapsto \loc_2\}}_{\compare,\criterion} \trace_{2,\state}|_{\{\loc_2\}}$]
            \State $\potential \gets \potential \setminus \{(\loc_1,\loc_2)\}$
            \State \MyBreak
          \EndMyIf
         \EndMyForAll
      \EndMyForAll
      \MyForAll[$\pi \in \keyw{MaximumBipartiteMatching}(\potential)$]
        \State $\vart{found} \gets \keyw{true}$
        \MyForAll[$\state \in \invals$]
          \MyIf[$\trace_{1,\state} \not\subseqq^{\varmap}_{\compare,\criterion} \trace_{2,\state}$]
            \State $\vart{found} \gets \keyw{false}$
            \State \MyBreak
          \EndMyIf
        \EndMyForAll
        \MyIf[$\vart{found}=true$]
          \Return $\keyw{true}$
        \EndMyIf
      \EndMyForAll
      \State \Return $\keyw{false}$
    \EndMyFunction
  \end{algorithmic}
  \caption{Algorithm for Trace Embedding problem.}
  \figlabel{AlgTrace}
\end{figure}

The key strength of our algorithm is that it reduces the search space for
possible embedding witnesses $\varmap$.
The experimental evidence shows that this approach significantly reduces the number of possible
matchings and enables a very efficient algorithm in
practice, as discussed in \secref{Experiments}.

\subsection{Partial Matching} \seclabel{partial-matching}
We now define the notion of {\em partial matching} (also referred to simply as {\em matching}) which is used to check whether an implementation involves (at least)
those inefficiency issues that underlie a given inefficient specification.

\begin{definition}[Partial Matching]\deflabel{partial-matching}
Let $\prog$ be a specification with observed locations $\progloc_1$, let
$\compare$
be the comparison function
specified by $\prog$, and let $\progb$ be an implementation whose assignment
statements are labeled by $\progloc_2$.
Then implementation $\progb$ (partially) {\bf matches} specification $\prog$, on a set of
inputs $\invals$, if and only if there exists a mapping function
$\varmap : \progloc_1 \to \progloc_2$
and an assignment to the non-deterministic variables
$\state_\nondet$ such that $\trace_{1,\state}
\subseqq^{\varmap}_{\compare,\criterion} \trace_{2,\state}$, for all input
values $\state \in \invals$, where
$\trace_{1,\state} = \semfun{}{\prog}(\state \cup \state_\nondet)$,
$\trace_{2,\state} = \semfun{}{\progb}(\state)$ and $\criterion = \Partial$.
\end{definition}

\figref{AlgMain} describes an algorithm for testing if an implementation (partially) matches a given specification over a given set of input valuations $\invals$.
In lines 6-7, the implementation $\progb$ is executed on all input values
$\state \in \invals$.
In line 9, the algorithm iterates through all assignments $\state_\nondet$ to the
non-deterministic variables $\binvars_\prog$ of the specification $\prog$.
In lines 10-11, the specification $\prog$ is executed on all inputs $\state \in
\invals$.
With both sets of traces available, line 12 calls subroutine
\tsc{\algtraceeq} which returns \keyw{true} iff there exists a trace embedding witness.

\begin{figure}[t]
  \begin{algorithmic}[1]
    \MyFunction[\algmain]{$\text{Specification }\prog,\text{ Implementation }\progb,\text{ Inputs }\invals$}
      \State $\progloc_1$ = observed locations in $\prog$
      \State $\compare$ = comparison function specified by $\prog$
      \State $\criterion$ = matching criterion
      \State $\progloc_2$ = assignment locations of $\progb$
      \MyForAll[$\state \in \invals$]
        \State $\trace_{2,\state} \gets \semfun{}{\progb}(\state)$
      \EndMyForAll
      \State $\binvars_\prog = \text{ non-deterministic variables in }\prog$
      \MyForAll[$\text{assignments }\state_\nondet \text{ to } \binvars_\prog$]
        \MyForAll[$\state \in \invals$]
          \State $\trace_{1,\state} \gets \semfun{}{\prog}(\state \cup \state_\nondet)$
        \EndMyForAll
        \MyIf[$\tsc{\algtraceeq}((\trace_{1,\state})_{\state \in \invals},(\trace_{2,\state})_{\state \in \invals},\progloc_1,\progloc_2,\compare,\criterion)$]
          \State \Return{$\keyw{true}$}
        \EndMyIf
      \EndMyForAll
      \State \Return{$\keyw{false}$}
    \EndMyFunction
  \end{algorithmic}
  \caption{Matching algorithm.}
  \figlabel{AlgMain}
\end{figure}

\begin{figure}[t]
  \centering
  \footnotesize
  \begin{tabular}{c|cp{1pt}}
    \hspace{-0.17in}
    \begin{tabular}{cp{1pt}}
      \hspace{0.05in}
    {\begin{mcodeF}
Puzzle(s, t) {
  i = 0;
  n = |s|;
  while (i < n) {
    c = $\chl{s[i]}$;
    j = 0;
    cnt1 = 0;
    while (j < n) {
      c2 = $\chl{s[j]}$;
      if (c == c2) {
        cnt1 = cnt1 + 1; }
      j = j + 1; }
    j = 0;
    cnt2 = 0;
    while (j < n) {
      c2 = $\chl{t[j]}$;
      if (c == c2) {
        cnt2 = cnt2 + 1; }
      j = j + 1; }
    i = i + 1; }}
    \end{mcodeF}} &\vspace{-1.13in}(a)\\ \hline
  \hspace{0.05in}
    {\begin{mcodeF}
Puzzle(s, t) {
  i = 0;
  n = |s|;
  while (i < n) {
    c = $\chl{s[i]}$;
    ss = $\chl{\tsc{Split}(s,c)}$;
    cnt1 = |ss|;
    st = $\chl{\tsc{Split}(t,c)}$;
    cnt2 = |st|;
    i = i + 1; }}
    \end{mcodeF}} &\vspace{-0.57in}(b)
  \end{tabular} &
  \hspace{0.05in}
    {\begin{mcodeH}
Puzzle(s, t) {
  i = 0;
  n = |s|;
  while (i < n) {
    c = s[i];
    $\chl{observe}$(c);
    j = 0;
    cnt1 = 0;
    while (j < n) {
      c2 = s[j];
      if ($\chl{nd1}$)
        $\chl{observe}$(c2);
      j = j + 1; }
    j = 0;
    if (!$\chl{nd1}$)
      $\chl{observeFun}$($\tsc{Split}$());
    while (j < n) {
      c2 = t[j];
      if ($\chl{nd1}$)
        $\chl{observe}$(c2);
      j = j + 1; }
    if (!$\chl{nd1}$)
      $\chl{observeFun}$($\tsc{Split}$());
    i = i + 1; }}
    \end{mcodeH}} & \vspace{-1.73in}(c)%
  \end{tabular}
  \caption{Implementations (a), (b) and Spec. (c). 
  }
  \figlabel{FullExample}
\end{figure}

\paragraph{Example}
We now give an example that demonstrates our notion of programs and that
contains example applications of algorithms \tsc{\algtraceeq} and
\tsc{\algmain}.
In \figref{FullExample} we state two implementations, (a) and (b), and one
specification (c).
These programs represent simplified versions (transformed into three adress code) of \ex{R1} (after function
inlining), \ex{R3} and \ex{SC} (\figref{Motivation}).
Note, that every assignment and \observes statement is on its own line; we
denote line $i$ in program $x$ by by location $\loc_{x,i}$.
The argument $[\equality]$ 
has been left out for all locations in the specification,
thus we have $\compare(\loc) = \equality_\default$
for all specification locations $\loc$.

Algorithm \tsc{\algmain} runs all three programs on input values s = "aab" and t = "aba".
For program (a) we obtain the following computation trace:\\
$\trace_a=(\loc_{a,2},0)(\loc_{a,3},3)(\loc_{a,5},a)(\loc_{a,6},0)(\loc_{a,7},0)(\loc_{a,9},a)(\loc_{a,11},1)$
$(\loc_{a,12},1)(\loc_{a,9},a)(\loc_{a,11},2)(\loc_{a,12},2)(\loc_{a,9},b)(\loc_{a,12},3)(\loc_{a,13},0)\cdots$
Similarly, for program (b) we obtain:\\
$\trace_b=(\loc_{b,2},0)(\loc_{b,3},3)(\loc_{b,5},a)(\loc_{b,6},(\tsc{Split},aab,a))(\loc_{b,7},3)$\\
$(\loc_{b,8},(\tsc{Split},aba,a))(\loc_{b,9},3)(\loc_{b,10},1)(\loc_{b,5},a)\cdots$\\
For specification (c) we obtain two traces, depending on the choice for the non-deterministic variable $\vart{nd1}$:\\
$\trace_{c,\keyw{t}}=(\loc_{c,6},a)(\loc_{c,12},a)(\loc_{c,12},a)(\loc_{c,12},b)(\loc_{c,20},a)(\loc_{c,20},b)\cdots$\\
$\trace_{c,\keyw{f}}=(\loc_{c,6},a)(\loc_{c,16},(\tsc{Split},?,?))(\loc_{c,23},(\tsc{Split},?,?))\cdots$

Algorithm \tsc{\algmain} then calls \tsc{\algtraceeq} to check for trace embedding.
Algorithm \tsc{\algtraceeq} first constructs a potential graph $\potential$,
which contains an edge for two locations of the specification and the implementation
that show the same values.\\
For implementation (a), we obtain the following graph:
$\potential_a=\{(\loc_{c,6},\loc_{a,5}), (\loc_{c,6},\loc_{a,9}), (\loc_{c,6},\loc_{a,16}), (\loc_{c,12},\loc_{a,9}), (\loc_{c,20},\loc_{a,16})\}$.
Notice that $\loc_{c,6}$ shows the same values as the locations $\loc_{a,5},\loc_{a,9},\loc_{a,16}$ in the implementation (a).
However, there is only one maximal matching in $\potential_a$,
$\varmap_a=\{(\loc_{c,6},\loc_{a,5}), (\loc_{c,12},\loc_{a,9}), (\loc_{c,20},\loc_{a,16})\}$,
which is also an embedding witness; thus implementation (a) matches specification (c).\\
For implementation (b) and $\vart{nd1}=\keyw{true}$, we obtain the graph
$\potential_{b,\keyw{t}}=\{(\loc_{c,6},\loc_{b,5})\}$, from which we cannot construct
a maximal matching.
However, for $\vart{nd1}=\keyw{false}$, we obtain $\potential_{b,\keyw{f}}=\{(\loc_{c,6},\loc_{b,5}),(\loc_{c,16},\loc_{b,6}),(\loc_{c,23},\loc_{b,8})\}$,
which is also an embedding witness; thus implementation (b) matches specification (c).

\subsection{Full Matching}
Below we will define the notion of {\em full matching}, which is used to match
implementations against efficient specifications.
We will require that for every loop and every library function call in the implementation there is a corresponding loop and
library function call in the matching specification.
In order to do so, we need some helper definitions.

\paragraph{Observed loop iterations}
We extend the construction of the implementation trace (defined in \secref{Implementation}):
For each statement $\loc:~\keyw{while}~\var~\keyw{do}~\stmt$, we additionally
append element $(\loc,\bot)$ to the trace whenever the loop body $\stmt$ is entered.
We call $(\loc,\bot)$ a {\em loop iteration}.
Let $\varmap$ be a embedding witness s.t.,
$\trace_1 \subseqq^\varmap \trace_2$.
We say that $\varmap$ {\em observes all loop iterations} iff
between every two
loop iterations $(\loc,\bot)$ in $\trace_2$ there exists a
pair $(\loc',\val)$, such that $\exists\loc''.\varmap(\loc'')=\loc'$.
In other words, we require that between any two iterations of the same
loop, there exists some observed location $\loc'$.

\paragraph{Observed library function calls}
We say that $\varmap$ {\em observes all library function calls}
iff for every $(\loc,\libs(\val_1,\dots,\val_n))$ in $\trace_2$ there is a $\loc'$ such that
$\varmap(\loc')=\loc$.

\begin{definition}[Full Matching]\deflabel{full-matching}
Let $\prog$ be a specification with observed locations $\progloc_1$, let
$\compare$ be the comparison function
specified by $\prog$, and let $\progb$ be an implementation whose assignment
statements are labeled by $\progloc_2$.
Then implementation $\progb$ {\bf fully matches} specification $\prog$, on a set of
inputs $\invals$, if and only if there exists a mapping function
$\varmap : \progloc_1 \to \progloc_2$ and an assignment to the non-deterministic variables
$\state_\nondet$ such that
$\trace_{1,\state} \subseqq^{\varmap}_{\compare,\criterion} \trace_{2,\state}$, for all input
valuations $\state \in \invals$, where
$\trace_{1,\state} = \semfun{}{\prog}(\state \cup \state_\nondet)$,
$\trace_{2,\state} = \semfun{}{\progb}(\state)$, $\criterion = \Full$
and $\varmap$ observes all loop iterations and library function calls.
\end{definition}

We note that procedure \tsc{\algtraceeq} (\figref{AlgTrace}) can easily check at line 11 whether the current mapping $\varmap$ observes all loop iterations and library function calls.

It is tedious for a teacher to {\em exactly specify} all
possible loop iterations and library function calls used in different efficient implementations.
We add two additional constructs to the language \langl to simplify this specification task.

\paragraph{Cover statement}
We extend \langl by two {\em cover statements}:
$\loc : \covers(\libs[\var_1,\dots,\var_m],[\equality])$ and $\loc : \covers(\var)$.
The first statement is the same as the statement
$\loc : \observesFun($ $\libs[\var_1,\dots,\var_m],[\equality])$, except that we allow
the embedding witness $\varmap$ to not map $\loc$ to any location in the
implementation.
This enables the teacher to specify that function $\libs(\var_1,\dots,\var_m)$
{\em may appear} in the implementation.
The second statement allows $\varmap$ to map $\loc$ to a location $\loc'$ that
appears {\em at most} $\state(\var)$ times for each appearance of $\loc$, where
$\state(\var)$ is the current value of the specified variable $\var$.
Thus $\covers(\var)$ enables the teacher to {\em cover any loop} with up to
$\state(\var)$ iterations.

\paragraph{Example}
Now we present examples for efficient implementations (\ex{E1} and \ex{E2}) and
specification (\ex{ES}) for the Anagram problem (\figref{Motivation}).
The teacher observes computed values on lines 9, 12 and 14,
and uses a non-deterministic choice (on line 11)
to choose if implementations count the number of characters in each string,
or decrement one number from another.
Also the teacher allows {\em up to} two library function calls and two loops
with {\em at most} 255 iterations, defined by \covers statements on lines 4,5,6
and 17.

\section{Extensions}
In this section, we discuss useful extensions to the core material presented above.
These extensions are part of our implementation, but we discuss them separately to
make the presentation easier to follow.

\paragraph{One-to-many Mapping} According to definition of Trace Embedding, an embedding witness $\varmap$
maps one implementation location to a specification location, i.e.,
it constructs a {\em one-to-one} mapping.
However, it is possible that a student {\em splits} a computation of some value
over multiple locations.
For example, in the implementation stated in \ex{R5} (\figref{Motivation}), the student removes a
character from a string across three different locations (on lines 9, 11, 13
and 14), depending on the location of the removed character in the string.
This requires to map a \emph{single} location from the specification to
\emph{multiple} locations in the implementation!
For this reason, we extend the notion of trace embedding to {\em one-to-many} mappings
$\varmap: \progloc_1 \to 2^{\progloc_2}$ where $\varmap(\loc') \cap \varmap(\loc) = \emptyset$ for all $\loc \neq \loc'$.
It is easy to extend procedure \tsc{\algtraceeq}
(\figref{AlgTrace}) to this setting:
the potential graph $\potential$ is also helpful to enumerate every possible {\em one-to-many} mapping.
However, it is costly (and unnecessary) to search for arbitrary one-to-many
mappings.
We use heuristics to consider only a few one-to-many mappings.
For example, one of the heuristics in our implementation checks if the same
variable is assigned in different branches of an if-statement (e.g., in example
\ex{R5}, for all three locations there is an assignment to variable
$\vart{cp}$).

Although {\em many-to-many} mappings may seem more powerful, we
point out that the teacher can always write a specification that is more
succinct than the implementation of the student, i.e., the above described
one-to-many mappings provide enough expressivity to the teacher.

\paragraph{Non-deterministic behaviour}
\techreporthide{In our technical report~\cite{technical-report} we discuss another extension:
libraries with non-deterministic behaviour (e.g., the iteration order over a set).}
\techreport{Trace Embedding requires {\em equal values} in the {\em same order} in the
specification and implementation traces.
However, an implementation can use a library function with non-deterministic
behaviour, e.g., the values returned by a random generator or the iteration
order over a set data structure.
For such library functions we eliminate non-determinism by fixing one
particular behaviour, i.e, we fix the values returned by a random generator or
the iteration order over a set during program instrumentation.
These fixes do not impact functionally correct programs because they cannot
rely on some non-deterministic behaviour but allow us to apply our matching
techniques.}

\section{Implementation and Experiments}\seclabel{Experiments}

\begin{table*}[!htpb]
  \begin{center}
  \scriptsize
  \begin{tabular}{|c|c|c|c|c|c|c|c|c|c|c|c|c|} \hline
    {\bf Problem} 
    & {\bf Correct} 
    & {\bf Inefficient} 
    & \multirow{2}{*}{$N$}
    & \multirow{2}{*}{$S$}
    & \multirow{2}{*}{$I$}
    & \multirow{2}{*}{$\vart{ND}$}
    & \multirow{2}{*}{\ExpLOCRatio}
    & \multirow{2}{*}{$O_S$} & \multirow{2}{*}{$O_I$}
    & \multirow{2}{*}{$M$}
    & \multicolumn{2}{c|}{\bf Performance}\\ \cline{12-13}

    {\bf Name} 
    & {\bf Implement.} 
    & {\bf Implement.} 
    & 
    & 
    & 
    & 
    & 
    & & 
    & 
    & {\bf Avg.} & {\bf Max.} \\ \hline 
    \noalign{\smallskip} \hline

    \experiment{
      name=Anagram,
      total=766,
      correct=290,
      inefficient=261,
      algos=5,
      refines=25,
      inputs=1,
      locratio=1.41,
      nondets=3,
      obsspec=11,
      obsimpl=89,
      mapall=28357,
      mapfirst=28357,
      timeavg=0.42,
      timemax=7.67,
    }
    \experiment{
      name=IsSorted,
      total=1620,
      correct=1460,
      inefficient=139,
      algos=3,
      refines=23,
      inputs=2,
      locratio=1.45,
      nondets=2,
      obsspec=6,
      obsimpl=51,
      mapall=13,
      mapfirst=13,
      timeavg=0.33,
      timemax=1.31,
    }
    \experiment{
      name=Caesar,
      total=697,
      correct=566,
      inefficient=343,
      algos=5,
      refines=18,
      inputs=1,
      locratio=1.10,
      nondets=1,
      obsspec=7,
      obsimpl=39,
      mapall=172,
      mapfirst=172,
      timeavg=0.37,
      timemax=0.83,
    }

    \noalign{\smallskip} \hline

    \experiment{
      name=DoubleChar,
      total=47,
      correct=46,
      inefficient=31,
      algos=1,
      refines=5,
      inputs=1,
      locratio=0.72,
      nondets=0,
      obsspec=3,
      obsimpl=23,
      mapall=2,
      mapfirst=2,
      timeavg=0.31,
      timemax=0.42,
    }
    \experiment{
      name=LongestEqual,
      total=47,
      correct=37,
      inefficient=1,
      algos=1,
      refines=3,
      inputs=1,
      locratio=0.57,
      nondets=0,
      obsspec=1,
      obsimpl=35,
      mapall=2,
      mapfirst=2,
      timeavg=0.33,
      timemax=0.44,
    }
    \experiment{
      name=LongestWord,
      total=47,
      correct=39,
      inefficient=13,
      algos=2,
      refines=6,
      inputs=2,
      locratio=1.31,
      nondets=0,
      obsspec=7,
      obsimpl=46,
      mapall=15,
      mapfirst=15,
      timeavg=0.35,
      timemax=0.47,
    }
    \experiment{
      name=RunLength,
      total=44,
      correct=43,
      inefficient=32,
      algos=1,
      refines=6,
      inputs=1,
      locratio=0.90,
      nondets=0,
      obsspec=8,
      obsimpl=37,
      mapall=54,
      mapfirst=54,
      timeavg=0.33,
      timemax=0.44,
    }
    \experiment{
      name=Vigenere,
      total=44,
      correct=41,
      inefficient=32,
      algos=3,
      refines=5,
      inputs=1,
      locratio=0.64,
      nondets=0,
      obsspec=3,
      obsimpl=84,
      mapall=6,
      mapfirst=6,
      timeavg=0.34,
      timemax=0.50,
    }
    \experiment{
      name=BaseToBase,
      total=38,
      correct=15,
      inefficient=14,
      algos=2,
      refines=5,
      inputs=1,
      locratio=0.35,
      nondets=1,
      obsspec=3,
      obsimpl=64,
      mapall=13,
      mapfirst=13,
      timeavg=0.36,
      timemax=0.48,
    }
    \experiment{
      name=CatDog,
      total=47,
      correct=41,
      inefficient=8,
      algos=2,
      refines=18,
      inputs=1,
      locratio=2.02,
      nondets=1,
      obsspec=21,
      obsimpl=53,
      mapall=1629,
      mapfirst=1629,
      timeavg=0.36,
      timemax=0.58,
    }
    \experiment{
      name=MinimalDelete,
      total=38,
      correct=15,
      inefficient=8,
      algos=1,
      refines=8,
      inputs=2,
      locratio=2.21,
      nondets=3,
      obsspec=4,
      obsimpl=75,
      mapall=10,
      mapfirst=10,
      timeavg=0.86,
      timemax=4.36,
    }
    \experiment{
      name=CommonElement,
      total=45,
      correct=43,
      inefficient=32,
      algos=4,
      refines=14,
      inputs=2,
      locratio=0.97,
      nondets=1,
      obsspec=6,
      obsimpl=79,
      mapall=107,
      mapfirst=107,
      timeavg=0.36,
      timemax=0.53,
    }
    \experiment{
      name=Order3,
      total=46,
      correct=40,
      inefficient=30,
      algos=6,
      refines=12,
      inputs=1,
      locratio=1.45,
      nondets=2,
      obsspec=6,
      obsimpl=78,
      mapall=19,
      mapfirst=19,
      timeavg=0.40,
      timemax=0.59,
    }
    \experiment{
      name=2DSearch,
      total=44,
      correct=37,
      inefficient=36,
      algos=3,
      refines=7,
      inputs=1,
      locratio=1.09,
      nondets=1,
      obsspec=2,
      obsimpl=67,
      mapall=1,
      mapfirst=1,
      timeavg=0.34,
      timemax=0.45,
    }
    \experiment{
      name=TableAggSum,
      total=44,
      correct=11,
      inefficient=10,
      algos=1,
      refines=5,
      inputs=1,
      locratio=0.80,
      nondets=1,
      obsspec=3,
      obsimpl=144,
      mapall=1,
      mapfirst=1,
      timeavg=0.40,
      timemax=0.53,
    }
    \experiment{
      name=Intersection,
      total=44,
      correct=14,
      inefficient=12,
      algos=3,
      refines=7,
      inputs=2,
      locratio=0.89,
      nondets=1,
      obsspec=4,
      obsimpl=73,
      mapall=5,
      mapfirst=5,
      timeavg=0.37,
      timemax=0.56,
    }
    \experiment{
      name=ReverseList,
      total=40,
      correct=39,
      inefficient=0,
      algos=0,
      refines=3,
      inputs=1,
      locratio=0.35,
      nondets=0,
      obsspec=4,
      obsimpl=34,
      mapall=1,
      mapfirst=1,
      timeavg=0.34,
      timemax=0.44,
    }
    \experiment{
      name=SortingStrings,
      total=45,
      correct=41,
      inefficient=34,
      algos=5,
      refines=11,
      inputs=1,
      locratio=1.48,
      nondets=1,
      obsspec=13,
      obsimpl=110,
      mapall=866,
      mapfirst=866,
      timeavg=0.55,
      timemax=14.59,
    }
    \experiment{
      name=MinutesBetween,
      total=45,
      correct=45,
      inefficient=0,
      algos=0,
      refines=5,
      inputs=1,
      locratio=0.64,
      nondets=0,
      obsspec=8,
      obsimpl=101,
      mapall=1,
      mapfirst=1,
      timeavg=0.37,
      timemax=0.48,
    }
    \experiment{
      name=MaxSum,
      total=44,
      correct=42,
      inefficient=17,
      algos=2,
      refines=7,
      inputs=1,
      locratio=1.14,
      nondets=1,
      obsspec=2,
      obsimpl=51,
      mapall=3,
      mapfirst=3,
      timeavg=0.35,
      timemax=0.47,
    }
    \experiment{
      name=Median,
      total=47,
      correct=47,
      inefficient=47,
      algos=1,
      refines=1,
      inputs=1,
      locratio=0.39,
      nondets=0,
      obsspec=1,
      obsimpl=100,
      mapall=1,
      mapfirst=1,
      timeavg=0.34,
      timemax=0.44,
    }
    \experiment{
      name=DigitPermutation,
      total=36,
      correct=36,
      inefficient=1,
      algos=1,
      refines=3,
      inputs=1,
      locratio=0.26,
      nondets=0,
      obsspec=4,
      obsimpl=29,
      mapall=4,
      mapfirst=4,
      timeavg=0.32,
      timemax=0.44,
    }
    \experiment{
      name=Coins,
      total=41,
      correct=27,
      inefficient=14,
      algos=2,
      refines=6,
      inputs=1,
      locratio=1.65,
      nondets=1,
      obsspec=4,
      obsimpl=93,
      mapall=175,
      mapfirst=175,
      timeavg=2.41,
      timemax=15.44,
    }
    \experiment{
      name=Seq235,
      total=37,
      correct=33,
      inefficient=30,
      algos=4,
      refines=12,
      inputs=1,
      locratio=1.79,
      nondets=2,
      obsspec=3,
      obsimpl=232,
      mapall=3,
      mapfirst=3,
      timeavg=0.94,
      timemax=22.08,
    }

  \end{tabular}
  \end{center}
  \caption{List of all assignments with the experimental results.
    \ignore{}
  }
  \tablabel{ProblemList}
\end{table*}

\techreporthide{\begin{figure}
  \begin{tikzpicture}
    \begin{axis}[
        cycle list name=CycleList1,
        xlabel={\# of {\em inspection} steps},
        ylabel={\# of matched implementations},
        legend style={at={(0.6,0.8)},anchor={north west}},
        legend entries={Anagram,IsSorted,Caesar},
      ]
      \addplot table {data/match_anagram.txt};
      \addplot table {data/match_issorted.txt};
      \addplot table {data/match_caesar.txt};
    \end{axis}
  \end{tikzpicture}
  \vspace{-0.3cm}
  \figlabel{DiagramInspectionMOOC}
  \caption{The number of inspection steps required to completely specify the MOOC-style assignments.}
\end{figure}
}
\techreport{\begin{figure*}
  \begin{tabular}{cc}
  \begin{tikzpicture}
    \begin{axis}[
        cycle list name=CycleList1,
        title={(a) \# of required inspection steps (1/3)},
        xlabel={\# of {\em inspection} steps},
        ylabel={\# of matched implementations},
      ]
      \addplot table {data/match_anagram.txt};
      \addplot table {data/match_issorted.txt};
      \addplot table {data/match_caesar.txt};
    \end{axis}
  \end{tikzpicture}
  & \begin{tikzpicture}
    \begin{axis}[
        cycle list name=CycleList1,
        title={(b) time required to write/refine specifications (1/3)},
        xlabel={time $[min]$},
        ylabel={\# of matched implementations},
        legend entries={Anagram,IsSorted,Caesar},
      ]
      \addplot table {data/time_anagram.txt};
      \addplot table {data/time_issorted.txt};
      \addplot table {data/time_caesar.txt};
    \end{axis}
  \end{tikzpicture} \\
  \begin{tikzpicture}
    \begin{axis}[
        cycle list name=CycleList2,
        title={(c) \# of required inspection steps (2/3)},
        xlabel={\# of {\em inspection} steps},
        ylabel={\# of matched implementations},
      ]
      \addplot table {data/match_01DoubleChar.txt};
      \addplot table {data/match_02LongestEqual.txt};
      \addplot table {data/match_03LongestWord.txt};
      \addplot table {data/match_04RunLength.txt};
      \addplot table {data/match_05Vigenere.txt};
      \addplot table {data/match_06BaseToBase.txt};
      \addplot table {data/match_07CatDog.txt};
      \addplot table {data/match_08MinimalDelete.txt};
      \addplot table {data/match_09CommonElement.txt};
      \addplot table {data/match_10Order3.txt};
    \end{axis}
  \end{tikzpicture}
  & \begin{tikzpicture}
    \begin{axis}[
        cycle list name=CycleList2,
        title={(d) time required to write/refine specifications (2/3)},
        xlabel={time $[min]$},
        ylabel={\# of matched implementations},
        legend style={at={(-0.15,1.07)},anchor={north west}},
        legend columns = 2,
        ymax = 80,
        legend entries={DoubleChar,LongestEqual,LongestWord,RunLength,Vigenere,BaseToBase,CatDog,MinimalDelete,CommonElement,Order3},
      ]
      \addplot table {data/time_01DoubleChar.txt};
      \addplot table {data/time_02LongestEqual.txt};
      \addplot table {data/time_03LongestWord.txt};
      \addplot table {data/time_04RunLength.txt};
      \addplot table {data/time_05Vigenere.txt};
      \addplot table {data/time_06BaseToBase.txt};
      \addplot table {data/time_07CatDog.txt};
      \addplot table {data/time_08MinimalDelete.txt};
      \addplot table {data/time_09CommonElement.txt};
      \addplot table {data/time_10Order3.txt};
    \end{axis}
  \end{tikzpicture}  \\
  \begin{tikzpicture}
    \begin{axis}[
        cycle list name=CycleList2,
        title={(c) \# of required inspection steps (3/3)},
        xlabel={\# of {\em inspection} steps},
        ylabel={\# of matched implementations},
      ]
      \addplot table {data/match_11_2DSearch.txt};
      \addplot table {data/match_12TableAggSum.txt};
      \addplot table {data/match_13Intersection.txt};
      \addplot table {data/match_14ReverseList.txt};
      \addplot table {data/match_15SortingStrings.txt};
      \addplot table {data/match_16MinutesBetween.txt};
      \addplot table {data/match_17MaxSum.txt};
      \addplot table {data/match_18Median.txt};
      \addplot table {data/match_19DigitPermutation.txt};
      \addplot table {data/match_20Coins.txt};
      \addplot table {data/match_21Seq235.txt};
    \end{axis}
  \end{tikzpicture}
  & \begin{tikzpicture}
    \begin{axis}[
        cycle list name=CycleList2,
        title={(d) time required to write/refine specifications (3/3)},
        xlabel={time $[min]$},
        ylabel={\# of matched implementations},
        legend style={at={(0.2,1.07)},anchor={north west}},
        legend columns=2,
        ymax=80,
        legend entries={2DSearch,TableAggSum,Intersection,ReverseList,SortingStrings,MinutesBetween,MaxSum,Median,DigitPermutation,Coins,Seq235},
      ]
      \addplot table {data/time_11_2DSearch.txt};
      \addplot table {data/time_12TableAggSum.txt};
      \addplot table {data/time_13Intersection.txt};
      \addplot table {data/time_14ReverseList.txt};
      \addplot table {data/time_15SortingStrings.txt};
      \addplot table {data/time_16MinutesBetween.txt};
      \addplot table {data/time_17MaxSum.txt};
      \addplot table {data/time_18Median.txt};
      \addplot table {data/time_19DigitPermutation.txt};
      \addplot table {data/time_20Coins.txt};
      \addplot table {data/time_21Seq235.txt};
    \end{axis}
  \end{tikzpicture}
  \end{tabular}
  \figlabel{DiagramTech}
  \caption{The number of inspection steps and time required to completely specify assignments.}
\end{figure*}
}

We now describe our implementation and present an experimental evaluation of
our framework.
More details on our experiments can we found on the website~\cite{ArtifactsWeb}.

\subsection{Experimental Setup}

Our implementation of algorithm \tsc{\algmain} (\figref{AlgMain}) is in \csharp
and analyzes \csharp programs (i.e., implementations and specifications are in
\csharp).
We used Microsoft's Roslyn compiler framework~\cite{MSRoslyn}
for instrumenting every sub-expression to record value during program
execution.

\paragraph{Data}
We used 3 preexisting problems from \pex (as mentioned in
\secref{Introduction}):
(1) the {\em Anagram} problem, where students are asked to test if two strings
could be permuted to become equal,
(2) the {\em IsSorted} problem, where students are asked to test if the input
array is sorted, and
(3) the {\em Caesar} problem, where students are asked to apply Caesar cipher
to the input string.
We have chosen these 3 specific problems because they had a high number of
student attempts, diversity in algorithmic strategies and a problem was
explicitly stated (for many problems on \pex platform students have to guess the
problem from failing input-output examples).

We also created a new course on the \pex platform with 21 programming problems.
These problems were assigned as a homework to students in a second year
undergraduate course.
We created this course to understand performance related problems that {\it CS}
students make, as opposed to regular \pex users who might not have previous
programming experience.
We encouraged our students to write efficient implementations by giving more
points for performance efficiency than for mere functional correctness.
We omit the description of the problems here, but all
descriptions are available on the original course
page~\cite{PexCourse}.

\subsection{Methodology}\seclabel{TeachersWorkflow}
In the following we describe the methodology by which we envision the technique
in the paper to be used.

The teacher maintains a set of efficient and
inefficient specifications.
A new student implementation is checked against all available specifications.
If the implementation matches some specification, the associated feedback is
automatically provided to the student; otherwise the teacher is notified that there
is a new unmatched implementation. The teacher studies the implementation and
identifies one of the following reasons for its failure to match any existing
specification:
(i) The implementation uses a new strategy not seen before. In this case, the
teacher creates a new specification. 
(ii) The existing specification for the strategy used in the implementation is
too specific to capture the
implementation. In this case, the teacher refines that existing specification.
This overall process is repeated for each unmatched implementation.

\paragraph{New specification}
A teacher creates a new specification using the following steps: 
(i) Copy the code of the unmatched implementation. (ii) Annotate certain values and
function calls with observe statements.
(iii) Remove any unnecessary code (not needed in the specification) from the
implementation. (iv) Identify input values for the dynamic analysis for
matching. (v) Associate a feedback with the specification.

\paragraph{Specification refinement}
To refine a specification, the teacher identifies one of the following reasons
as to why an implementation did not match it: 
(i) The implementation differs in details specified in the specification; 
(ii) The specification observes more values than those that appear in the
implementation; 
(iii) The implementation uses different data representation.
In case (i) the teacher adds a new non-deterministic choice, and, if necessary,
observes new values or function calls; in case (ii) the teacher observes less
values; and in case (iii) the teacher creates or refines a custom data-equality.

\paragraph{Input values}
Our dynamic analysis approach requires the teacher to associate input
values with specifications. 
These input values should cause the corresponding 
implementations to exhibit their worst-case
behavior; otherwise an inefficient implementation might behave similar to an
efficient implementation and for this reason match the specification of the
efficient implementation.
This implies that {\em trivial} inputs should be avoided.
For example, two strings with unequal lengths constitute a trivial input for
the counting strategy since each of its three implementations \ex{C1}-\ex{C3} (\figref{Motivation})
then exit immediately. Similarly, providing a sorted input for the sorting
strategy is meaningless. We remark that it is easy for a teacher (who
understands the various strategies) to provide good input values.

\paragraph{Granularity of feedback}
We want to point out that the granularity of a feedback depends on the
teacher.
For example, in a programming problem where sorting the input value is an
inefficient strategy, the teacher might not want to distinguish between
different sorting algorithms, as they do not require a different feedback.
However, in a programming problem where students are asked to implement a
sorting algorithm it makes sense to provide a different feedback for different
sorting algorithms.

\subsection{Evaluation}

We report results on the 24 problems discussed above  in \tabref{ProblemList}.

\paragraph{Results from manual code study}
We first observe that a large number of students managed to write a
functionally correct implementation on most of the problems (column {\em
  Correct Implementations}).
This shows that \pex succeeds in guiding students towards a correct solution.

Our second observation is that for most problems a large fraction of
implementations is inefficient (column {\em Inefficient Implementations}),
especially for {\em Anagram} problem: 90\%.
This shows that although students manage to achieve functional correctness,
efficiency is still an issue (recall that in our homework the students were
explicitly asked and given extra points for efficiency).

We also observe that for all, except two, problems there is at least one
inefficient algorithmic strategy, and for most problems (62.5\%) there are
several inefficient algorithmic strategies (column $N$).
{\em These results highly motivate the need for a tool that can find
  inefficient implementations and also provide a meaningful feedback on how to
  fix the problem}.

\paragraph{Precision and Expressiveness}
For each programming assignment we used the above described methodology and
wrote a specification for each algorithmic strategy (both efficient and
inefficient).
We then {\em manually verified} that each specification matches all
implementations of the strategy, hence providing desired feedback for
implementations.
{\em This shows that our approach is {\em precise} and {\em expressive} enough to
  capture the algorithmic strategy, while ignoring low level implementation
  details.}

\paragraph{Teacher Effort}
To provide manual feedback to students the teacher would have to go through
every implementation and look at its performance characteristics.
In our approach the teacher has to take a look only at a few representative
implementations.
In column $S$ we report the total number of inspection steps that we required
to fully specify one programming
problem, i.e., the number of implementations that the teacher would had to go
through to provide feedback on all implementations.
For the 3 pre-existing problems {\em the teacher would
  only have to go through 66 out of 2316 (or around 3\%) implementations to
  provide full feedback}.
\techreporthide{\figref{DiagramInspectionMOOC} shows the number of matched
implementations with each inspection step (by inspecting the first unmatched
implementation in the random order).
For space reasons the diagram shows only 3 pre-existing assignments;
our technical report~\cite{technical-report} shows all 24 assignments as
well as the time it took us for each inspection step.}
\techreport{\figref{DiagramTech} shows the number of matched implementations
  with each inspection step, as well the time it took us to create/refine all
  specifications (we measured the time it takes from seeing an unmatched
  implementation, until writing/refining a matching specification for it).}

In column \ExpLOCRatio we report the largest ratio of specification and
average matched implementation in terms of lines of code.
We observe that in half of the cases the largest specification is about the same size
or smaller than the average matched implementation.
Furthermore, the number of the input values that need to be provided by the
teacher is 1-2 across all problems (column $I$).
In all but one problem ({\em IsSorted}) one set of input values is used for all
specifications.
Also, in about one third of the specifications there was no need for
non-deterministic variables, and the largest number used in one specification is
3 (column $\vart{ND}$).
{\em Overall, our semi-automatic approach requires considerably less teacher
  effort than providing manual feedback}.

\paragraph{Performance}
We plan to integrate our framework in a MOOC platform, so performance, as for
most web applications, is critical.
Our implementation consists of two parts.
The first part is the execution of the implementation and the specification
(usually small programs) on relatively small inputs and obtaining execution
traces, which is, in most cases, neglectable in terms of performance.
The second part is the \tsc{\algtraceeq} algorithm.
As discussed in \secref{Embedding} the challenge consists in finding an
embedding witness $\varmap$.
With $O_S$ observed variables in the specification and $O_I$ observed variables
in the implementation, there are $\frac{O_I!}{(O_I-O_S)!}$ possible injective
mapping functions.
E.g., for the {\em SortingStrings} problem that gives $\approx 10^{26}$
possible mapping functions ($O_I=110,O_S=13$).
However, our algorithm reduces this huge search space by constructing a
bipartite graph $\potential$ of potential mappings pairs.
In $M$ we report the number of
mapping functions that our tool had to explore.
E.g., for {\em SortingStrings} only 866 different mapping functions had to be
explored.
For all values ($O_S$, $O_I$ and $M$) we report the maximal number across
all specifications.
In the last column we state the {\em total execution time} required to decide
if one implementation matches the specification (average and maximal).
Note that this time includes execution of both programs, exploration of all
assignments to non-deterministic Boolean variables and finding an embedding witness
$\varmap$.
Our tool runs, in most cases, under half a second per implementation.
{\em These results show that our tool is fast enough to be used in an
  interactive teaching environment.}

\subsection{Threats to Validity}
\paragraph{Unsoundness}
Our method is unsound in general since it uses a dynamic analysis that explores
only a few possible inputs.
However, we did not observe any unsoundness in our large scale experiments.
If one desires provable soundness, an embedding witness could be used as a
guess for a simulation relation that can then be formally verified by other
techniques. Otherwise, a student who suspects an incorrect feedback can always
bring it to the attention of the teacher.
\ignore{Old}

\paragraph{Program size}
We evaluated our approach on introductory programming assignments.
Although questions about applicability to larger programs might be raised, our
goal was not to analyze arbitrary programs, but rather to develop a framework
to help teachers who teach introductory programming with providing performance
feedback --- currently a manual, error-prone and time-consuming task.

\paragraph{Difficulty of the specification language}
Although we did not perform case study with third-party instructors, we report
our experiences with using the proposed language.
We would also like to point out that writing specifications is a one-time
investment, which could be performed by an experienced personnel.

\section{Related Work}
\seclabel{Related}

\subsection{Automated Feedback}
There has been a lot of work in the area of generating
automated feedback for programming assignments. This work can be
classified along three dimensions: (a) aspects on which the feedback is
provided such as functional correctness, performance characteristics
or modularity (b) nature of the feedback such as counterexamples, bug
localization or repair suggestions, and (c) whether static or dynamic
analysis is used.

Ihantola et.al.~\cite{Ihantola} present a
survey of various systems developed for automated grading of programming
assignments. The majority of these efforts have focussed on checking
for functional correctness. This is often done by examining
the behavior of a program on a set of test inputs.
These test inputs can be manually written or automatically
generated~\cite{nikolai:icse13}. There has only been little
work in testing for non-functional properties.
The ASSYST system uses a simple form of tracing for
counting execution steps to gather performance
measurements~\cite{assyst}.
The Scheme-robo system counts the number of evaluations done, which can be used
for very coarse complexity analysis.
The authors conclude that better error messages are the most important area of
improvement~\cite{Saikkonen2001}.

The AI community has built tutors that aim at
bug localization by comparing source code of the student and
the teacher's programs.
LAURA~\cite{laura} converts teacher's and student's program into a
graph based representation
and compares them heuristically by applying program transformations
while reporting mismatches as potential bugs. TALUS~\cite{talus}
matches a student's attempt with a collection of teacher's algorithms.
It first tries to recognize the algorithm used and then tentatively
replaces the top-level expressions in the student's attempt
with the recognized algorithm for generating correction feedback.
In contrast, we perform trace comparison (instead of source code
comparison), which provides robustness to syntactic variations.

Striewe and Goedicke have proposed localizing bugs by trace
comparisons. They suggested creating full traces of program behavior while running test cases to
make the program behavior visible to students~\cite{striewe11}. They
have also suggested automatically comparing the student's trace to
that of a sample solution~\cite{striewe13} for generating more directed
feedback. However, no implementation has been reported. We also compare the
student's trace with the teacher's trace, but we look for
similarities as opposed to differences.

Recently it was shown that automated techniques can also
provide repair based feedback for functional correctness.
Singh's SAT solving based technology~\cite{singh:pldi13} can
successfully generate feedback (of up to 4 corrections) on around 64\%
of all incorrect solutions (from an MIT introductory programming
course) in about 10 seconds on average. While test inputs provide
guidance on {\em why} a given solution is incorrect and bug
localization techniques provide guidance on {\em where} the error
might be, repairs provide guidance on {\em how} to fix an incorrect
solution. We also provide repair suggestions that are manually associated with
the various teacher specifications, but for performance based
aspects. 
Furthermore, our suggestions are not necessarily restricted to small fixes.

\subsection{Performance Analysis}
The Programming Languages and Software Engineering communities have
explored various kinds of techniques to generate performance
related feedback for programs.
Symbolic execution based techniques
have been used for identifying non-termination related
issues~\cite{gupta:popl08,burnim:ase09}.
The SPEED project investigated use of static analysis techniques for
estimating symbolic computational complexity of
programs~\cite{conf/sas/ZulegerGSV11,speed:popl09,speed:pldi10}. Goldsmith et.al.~used dynamic
analysis techniques for empirical computational
complexity~\cite{goldsmith07}.
The Toddler tool reports a specific pattern: computations with repetitive and similar
memory-access patterns~\cite{Nistor2013}.
The Cachetor tool reports memoization opportunities by identifying operations that
generate identical values~\cite{Nguyen2013}.
In contrast, we are interested in not
only identifying whether or not there is a performance issue, but also identifying its
root cause and generating repair suggestions.

\bibliographystyle{abbrv}
\bibliography{fse}

\end{document}